\documentclass[12pt, draftclsnofoot, onecolumn]{IEEEtran}
\usepackage{subcaption}
\usepackage{cite}
\usepackage{mathtools}
\usepackage{amssymb}
\usepackage{booktabs}
\usepackage{comment}

\ifCLASSINFOpdf
   \usepackage[pdftex]{graphicx}
  \usepackage{xcolor}

   \graphicspath{Fig/}

\else

  \usepackage{here}
  \usepackage[dvips]{graphicx}
  \usepackage[driver]{graphicx}
  \usepackage{epstopdf}
  \epstopdfDeclareGraphicsRule{.tif}{png}{.png}{convert #1 \OutputFile}
  \AppendGraphicsExtensions{.tif}
  \graphicspath{Fig/}

\fi
\usepackage{graphicx}

\usepackage{tabularx}
\usepackage{amsmath}
\usepackage{amsfonts}
\usepackage{amsthm}
\usepackage{bm}
\usepackage{mathtools}

\DeclarePairedDelimiter\floor{\lfloor}{\rfloor}

\usepackage{algorithm,algpseudocode}

\usepackage{enumitem}

\hyphenation{op-tical net-works semi-conduc-tor}

\newtheorem{thm}{Theorem}
\newtheorem{lem}{Lemma}

\newtheorem{cor}{Corollary}

\theoremstyle{definition}
\newtheorem{defn}{Definition}

\newtheorem{con}{Condition}

\begin{document}

\title{Geometric Sequence Decomposition with $k$-simplexes Transform}

\author{Woong-Hee Lee, Jong-Ho Lee,~\IEEEmembership{Member,~IEEE}, and Ki Won Sung,~\IEEEmembership{Member,~IEEE}
\thanks{Woong-Hee Lee and Ki Won Sung are with the School of Electrical Engineering and Computer Science, KTH Royal Institute of Technology, Kista, 164 40, Sweden (e-mail:woolee@kth.se; sungkw@kth.se).}
\thanks{Jong-Ho Lee is with the School of Electronic Engineering, Soongsil University, Seoul 06978, South Korea (e-mail:jongho.lee@ssu.ac.kr).}
}


\maketitle

\begin{abstract}
This paper presents a computationally efficient technique for decomposing non-orthogonally superposed $k$ geometric sequences.
The method, which is named as geometric sequence decomposition with $k$-simplexes transform (GSD-ST), is based on the concept of transforming an observed sequence to multiple $k$-simplexes in a virtual $k$-dimensional space and correlating the volumes of the transformed simplexes.
Hence, GSD-ST turns the problem of decomposing $k$ geometric sequences into one of solving a $k$-th order polynomial equation.
Our technique has significance for wireless communications because sampled points of a radio wave comprise a geometric sequence.
This implies that GSD-ST is capable of demodulating randomly combined radio waves, thereby eliminating the effect of interference.
To exemplify the potential of GSD-ST, we propose a new radio access scheme, namely non-orthogonal interference-free radio access (No-INFRA). Herein, GSD-ST enables the collision-free reception of uncoordinated access requests.
Numerical results show that No-INFRA effectively resolves the colliding access requests when the interference is dominant.
\end{abstract}

\IEEEpeerreviewmaketitle

\begin{IEEEkeywords}
geometric sequence decomposition,
$k$-simplexes transform,
non-orthogonal interference-free radio access
\end{IEEEkeywords}

\section{Introduction}

\IEEEPARstart{A}{} geometric sequence is a series of numbers in which the ratio between any two consecutive terms is fixed. Recall that a geometric sequence is expressed by
\begin{equation*}
    \{a, ar, ar^2, ar^3, \cdots \}
\end{equation*}
where $a$ is the initial term and $r$ is the common ratio of the sequence. Depending on $r$, the geometric sequence can increase, decrease, or remain constant as it progresses. The sequence may also oscillate in the complex plane if the common ratio is a complex number.

Consider $k$ geometric sequences with nonidentical common ratios, $\mathbf{s}_1, \mathbf{s}_2, \cdots, \mathbf{s}_k$. Assume that we have no information on the individual sequences and can observe only a superposition of $k$ geometric sequences $\mathbf{s}$:
\begin{equation} \label{superposition}
    \mathbf{s} = \mathbf{s}_1 + \mathbf{s}_2 + \cdots + \mathbf{s}_k = \left\{ \sum_{n=1}^{k} a_n,  \sum_{n=1}^{k} a_n r_n, \sum_{n=1}^{k} a_n r_n^2, \cdots \right\}.
\end{equation}
Let us pose the question as follows:
\begin{itemize}
  \item \emph{How can we decompose a superposition of geometric sequences into the individual sequences in a computationally efficient manner?}
\end{itemize}
To answer the question, $k$ should first be determined. Second, the parameters of each sequence, i.e., $a_i$ and $r_i$, should be determined.

The foremost contribution of this paper is to propose a new technique addressing the aforementioned problem. The main idea is to transform the observed sequence to a $k$-dimensional space using a well-known concept in geometry: $k$-simplex \cite{grunbaum1969convex}. We develop a method using this transform and call it geometric sequence decomposition with $k$-simplexes transform (GSD-ST). Our method turns the complicated problem of decomposing $k$ geometric sequences into a simple root-finding for a $k$-th order polynomial equation. GSD-ST requires only $2k+1$ samples of the superposed sequence to obtain $k$ and retrieve the parameters of each sequence. The number of required samples reduces further to $2k$ if $k$ is known a priori.

The proposed GSD-ST is a noteworthy mathematical tool and has significance for wireless communications. This is because a sampling of a radio wave is a geometric sequence. A radio wave is generally represented by a complex-valued function of the form $A e^{i2 \pi ft}$. Here, $A$ is a constant that accounts for the amplitude and phase, and $f$ and $t$ are the frequency and time, respectively. It is observed that a sampled progression of a radio wave with time interval $\Delta t$ forms a geometric sequence with the initial term $A$ and common ratio $e^{i 2\pi f \Delta t}$. This implies that if we can decompose a superposition of geometric sequences, we can also separate multiple incoming radio waves that are non-orthogonally accumulated. Therefore, our work lays the foundation for new methods of handling wireless signals in interference-limited environments.

\subsection{Related Works}

Interference is a phenomenon wherein multiple waves superpose to form a resultant wave. Because it is common in various fields dealing with waves, extensive studies have been conducted to extract a desired wave or to separate all the accumulated waves. These include studies on blind sound source separation in acoustics \cite{pham2003blind,sawada2019similarity}, target detection in radar systems \cite{wang2014radar,xu2017information}, and wireless communications \cite{kwan2009proportional, trivedi2014comparison, pang2004performance, sydor2014cognitive, zhu2004survey, perahia2013next, ghosh2010lte,dai2015non, ding2015cooperative, islam2016power, choi2017noma, lee2018beamforming}.

In wireless communication systems, interference has traditionally been managed by orthogonalizing the signals. One of the most intuitive approaches to orthogonality is to coordinate the multiple waves in the time domain. Therefore, time-division multiplexing (TDM) remains as the fundamental principle for user scheduling in cellular networks \cite{kwan2009proportional, trivedi2014comparison} and collision avoidance in Wi-Fi systems \cite{pang2004performance, sydor2014cognitive}. In the frequency domain, orthogonal frequency-division multiplexing (OFDM) has become a key element of modern broadband systems such as the IEEE 802.11 family \cite{zhu2004survey, perahia2013next} and Long Term Evolution (LTE) \cite{ghosh2010lte}.

The scarcity of radio spectrum compelled the researchers to advance beyond the orthogonal division of radio resources. Furthermore, the stringent requirements of high data rate and low latency in 5G magnify the need. Thus, numerous attempts have been undertaken to address the non-orthogonal accumulation of radio waves. If we confine the discussion to radio access where the challenge is to accommodate multiple uncoordinated requests with ultra-low latency, non-orthogonal multiple access (NOMA) has been considered to be a practical solution \cite{dai2015non, ding2015cooperative, islam2016power, choi2017noma, lee2018beamforming}. This technique separates the multiple signals in the power domain through iterative decoding, i.e., it relies on the power difference between received signals. Therefore, its effectiveness reduces as the number of accumulated signals increases and the power difference decreases. A new method that can separate several randomly superposed radio waves regardless of the distribution of the received powers would be highly effective for designing radio access schemes for ultra-low latency. We demonstrate that the proposed GSD-ST is a strong candidate.

\subsection{Main Contributions}
Our objective is to decompose a non-orthogonal superposition of geometric sequences into the original sequences without information loss when we can observe only the superposed sequence. We propose GSD-ST, which is a computationally efficient method, for achieving this. We provide the fundamental concept, simple numerical examples, and the formal methodology of GSD-ST in the subsequent sections. We also propose a practical de-noising technique for GSD-ST because the method may be prone to the effect of noisy observations.

As discussed earlier, the decomposition of geometric sequences is equivalent to the separation of non-orthogonally overlapping radio waves. We introduce a new radio access scheme to illustrate GSD-ST's potential for wireless communications. Named as non-orthogonal interference-free radio access (No-INFRA), it enables multiple transmitters to randomly select frequencies in a continuous domain within a specified signal bandwidth and to transmit simultaneously. Then, the receiver samples the mixed signal and decodes the information using GSD-ST. Unlike orthogonal resource division which inevitably suffers from collisions of access requests, No-INFRA achieves collision-free access by eliminating the notion of interference.

To summarize, our main contributions in this study are threefold: first, we propose a new method (GSD-ST) for decomposing non-orthogonally superposed geometric sequences; second, we provide the formal methodology consisting of theorems and proofs to sustain the GSD-ST method; third, we introduce a new radio access scheme (No-INFRA) to demonstrate how GSD-ST can be applied to wireless communications.

\subsection{Organization of the Paper}
The remainder of this paper is organized as follows:
In Section II, we explain the fundamental concept of GSD-ST and provide numerical examples to enable the readers to understand the new method. Section III presents the formal methodology of GSD-ST, the necessary theorems, and a practical de-noising process. In Section IV, we propose the No-INFRA scheme. In Section V, the performance of No-INFRA is compared with a conventional scheme of orthogonal resource division. Finally, Section VI presents the concluding remarks.

\subsection{Notations}

The following symbols are used throughout the paper:

\begin{itemize}
\item  {$\mathbb{N}_0$: $\{0\}\cup\mathbb{N}$}.
\item $\mathbf{b}:= \{ \mathbf{b}[l] \}_{l=0}^{P-1} \in \mathbb{C}^{P}$: an arbitrary sequence of length $P$ whose $l$-th element is $\mathbf{b}[l]$.
\item $k$: the number of superposed geometric sequences.
\item  {$\hat{k}$: an estimate of $k$.}
\item $\mathbf{s}_n$: the $n$-th geometric sequence.
\item $a_n\in \mathbb{C}$: the initial term of $\mathbf{s}_n$, ($\mathbf{a}:=\{a_1,\cdots, a_k\}$).
\item $r_n\in \mathbb{C}$: the common ratio of $\mathbf{s}_n$, ($\mathbf{r}:=\{r_1,\cdots, r_k\}$).
\item $\mathbf{s}(:=\sum_{n=1}^{k} \mathbf{s}_n)$: {the superposed sequence of the geometric sequences.}
\item  {$\text{card}(\cdot)$: the cardinality of a collection.}
\item {$(\cdot)^{\text{T}}$: the transpose of a matrix.}
\item {$\det(\cdot)$: the determinant of a square matrix.}
\item $e(v_0 , \cdots , v_{k-1})$: a function that returns the $k$-simplex by connecting the $k+1$ $k$-vertices which consists of the origin and the specified $k$ $k$-vertices, $v_0 , \dots , v_{k-1}$, in a $k$-dimensional space.
\item $\Lambda$: a function that returns the volume of the $k$-simplex \cite{stein1966note}, i.e., $\Lambda(e(v_0 , \dots , v_{k-1})):= \frac{\det([v_0 , \dots , v_{k-1}])}{k!}$. In addition, if the input is a series of $k$-simplexes, this function returns the series of the volume of each $k$-simplex as the output.
\item $\phi_{k}$ ($\in  {\mathbb{N}_0^{k}}$): an arbitrary collection of lexicographically ordered $k$ indices, e.g., $\phi_{3}=\{ 0 , 2, 7\}$.
\item $\phi_{{k},\mathbf{b}} := ( \mathbf{b}[\phi_k [0]],\cdots, \mathbf{b}[\phi_k [k-1]] )^{\text{T}} \in \mathbb{C}^{k }$: the $k$-vertex in a $k$-dimensional space which is sketched by $\phi_k$ and made by $k$ samples of the sequence $\mathbf{b}$.
\item $\mathbf{1}_k$: the one-vector whose length is $k$.
\end{itemize}

\section{Overview of GSD-ST}

\subsection{Fundamental Concept of GSD-ST}

Recalling \eqref{superposition}, our problem is to decompose a superposition of $k$ geometric sequences, $\mathbf{s}$, when we have no information on the individual sequences and can observe only $\mathbf{s}$. The problem has $2k+1$ unknowns, i.e., the unknowns with regard to $\mathbf{s}_n$ ($a_n$ and $r_n$) and $k$. Thus, in principle, $2k+1$ observations of $\mathbf{s}$ would be sufficient to solve the problem, i.e., to obtain $k$, $\mathbf{a}$, and $\mathbf{r}$. To our knowledge, there is no computationally efficient method of solving the problem.

The superposition is a one-dimensional progression, but it contains the information on the $k$ sequences. Therefore, we depart from the intuition that an appropriate transformation of the observed sequence to a $k$-dimensional space may facilitate the analysis of the overlap of $k$ geometric sequences. To achieve this, we employ the concept of $k$-simplex in geometry because it represents the simplest possible polytope in $k$-dimensional space \cite{grunbaum1969convex}. $k$-simplex is defined as a $k$-dimensional polytope which is the convex hull of its $k+1$ $k$-vertices. For example, a $3$-simplex means a tetrahedron in a three-dimensional space.

We elaborate on our intuition to obtain $k$: if we create a series of $k$-simplexes from $\mathbf{s}$, it may hold a particular relationship. We observed that the volumes of the $k$-simplexes constitute a new geometric sequence only when $k$ is assumed correctly. Next, we consider $\mathbf{r}$, which are the nonlinear parameters of $\mathbf{s}$. For a single geometric sequence, e.g., $\mathbf{s}_1$, we can conveniently obtain $r_1$ from any two consecutive samples of $\mathbf{s}_1$. That is, 1-simplex is sufficient for analyzing a geometric sequence. From this, we hypothesize that $\mathbf{r}$ could be obtained by effectively manipulating two consecutive $k$-simplexes made from $\mathbf{s}$. We succeeded in extracting a polynomial of degree $k$ whose roots are $\mathbf{r}$. Thus, we turned the problem of finding $\mathbf{r}$ into the root-finding of a $k$-th order polynomial equation. After obtaining $k$ and $\mathbf{r}$, $\mathbf{a}$ can be conveniently obtained through a simple linear operation.


\subsection{Numerical Example of GSD-ST}

In this subsection, we provide a numerical example to enable the readers to understand the concept of GDS-ST. Consider the following three geometric sequences:
\begin{equation}
\begin{split}
\mathbf{s}_1 & = \{ a_1 r_1^l \}_{l=0}^{P-1} = \{ 2 \cdot 2^l \}_{l=0}^{P-1} = \{2, 4, 8, 16, 32, \cdots \}, \\
\mathbf{s}_2 & = \{ a_2 r_2^l \}_{l=0}^{P-1} = \{ 1 \cdot 3^l \}_{l=0}^{P-1} = \{1, 3, 9, 27, 81,  \cdots \}, \\
\mathbf{s}_3 & = \{ a_3 r_3^l \}_{l=0}^{P-1} = \{ 4 \cdot (-1)^l \}_{l=0}^{P-1} = \{4, -4, 4, -4, 4, \cdots \}.
\end{split}
\end{equation}
Then, suppose that we have no information on the three sequences and that we can observe only their superposition, $\mathbf{s}$, i.e.,
\begin{equation}
    \mathbf{s} = \mathbf{s}_1 + \mathbf{s}_2 + \mathbf{s}_3 = \{  2 \cdot 2^l + 1 \cdot 3^l + 4 \cdot (-1)^l \}_{l=0}^{P-1} = \{7, 3, 21, 39, 117, 303, 861, 2439, 7077 \cdots \}.
\end{equation}

Our objectives are to obtain the number of superposed sequences and the parameters of each sequence.

\subsubsection{Obtaining $k$}\label{Obtainingk}
Consider an arbitrary $\hat{k}$ as an estimate of $k$. \textbf{Theorem 1} in Section \ref{search_space} states that the volumes of successively generated $\hat{k}$-simplexes constitute a non-zero geometric sequence if and only if $\hat{k}=k$.

For the case where  $\hat{k}=2$, we consider a two-dimensional space in which we generate 2-simplexes, i.e., triangles, from the origin and consecutive values of $\mathbf{s}$. Let us create three triangles, $A_1$, $A_2$, and $A_3$, with the following coordinates:
\begin{equation}
\begin{split}
   A_1 : & [ (0,0)^{\text{T}}, (7,3)^{\text{T}}, (3,21)^{\text{T}} ], \\
   A_2 : & [ (0,0)^{\text{T}}, (3,21)^{\text{T}}, (21,39)^{\text{T}} ], \\
   A_3 : & [ (0,0)^{\text{T}}, (21,39)^{\text{T}}, (39,117)^{\text{T}} ].
\end{split}
\end{equation}
Then, we examine whether the volumes of the triangles, $\Lambda(A_n)$, constitute a geometric sequence. Because $\Lambda(A_1) = 69$, $\Lambda(A_2) = -162$, and $\Lambda(A_3) = 468$, these do not constitute a geometric sequence. Therefore, we conclude that $k \neq 2$.

For the case where $\hat{k}=3$, we increase the dimension by one and consider 3-simplexes, i.e., tetrahedrons. We create three tetrahedrons (again denoted by $A_1$, $A_2$, and $A_3$) with the following coordinates:
\begin{equation}
\begin{split}
   A_1 : & [(0,0,0)^{\text{T}}, (7,3,21)^{\text{T}}, (3, 21, 39)^{\text{T}}, (21,39, 117)^{\text{T}}], \\
   A_2 : & [(0,0,0)^{\text{T}}, (3, 21, 39)^{\text{T}}, (21,39, 117)^{\text{T}},(39, 117, 303)^{\text{T}}], \\
   A_3 : & [(0,0,0)^{\text{T}}, (21,39, 117)^{\text{T}}, (39, 117, 303)^{\text{T}}, (117,303, 861)^{\text{T}}]. \\
\end{split}
\end{equation}
Here, $\Lambda(A_1) = 192$, $\Lambda(A_2) = -1152$, and $\Lambda(A_3) = 6912$. These constitute a geometric sequence with a common ratio of $-6$. Therefore, we verify that $\mathbf{s}$ is a superposition of \textit{three} geometric sequences $(k=3)$.

For $\hat{k} > 3$, one can verify that the volumes of the $\hat{k}$-simplexes always constitute a sequence of zeros.

\subsubsection{Obtaining $\mathbf{a}$ and $\mathbf{r}$}

Given that $k$ is obtained correctly, we can fully extract the original sequences with $2k$ sampling of $\mathbf{s}$. The procedure is divided into five steps.

First, we pick $2k$ consecutive elements from $\mathbf{s}$. Second, considering a $k$-dimensional space, place $k+1$ vertices whose coordinates are $k$ basic elements of $\mathbf{s}$ (see \textbf{Definition 2} in Section \ref{basic_simplex}). In this example, four vertices are created with the coordinates
\begin{equation}
    [(7,3,21)^{\text{T}}, (3, 21, 39)^{\text{T}}, (21,39, 117)^{\text{T}}, (39,117,303)^{\text{T}}].
\end{equation}
Third, select $k$ vertices out of the $k+1$ described above. By including the origin, we can create $k+1$ $k$-simplexes in a lexicographically ordered manner. This corresponds to four tetrahedrons in this example, with the following coordinates:
\begin{equation}
\begin{split}
   B_1 : & [(0,0,0)^{\text{T}}, (7,3,21)^{\text{T}}, (3, 21, 39)^{\text{T}}, (21,39, 117)^{\text{T}}], \\
   B_2 : & [(0,0,0)^{\text{T}}, (7, 3, 21)^{\text{T}}, (3, 21, 39)^{\text{T}}, (39, 117, 303)^{\text{T}}], \\
   B_3 : & [(0,0,0)^{\text{T}}, (7, 3, 21)^{\text{T}}, (21, 39, 117)^{\text{T}}, (39, 117, 303)^{\text{T}}], \\
   B_4 : & [(0,0,0)^{\text{T}}, (3, 21, 39)^{\text{T}}, (21, 39, 117)^{\text{T}}, (39, 117, 303)^{\text{T}} ].
\end{split}
\end{equation}
Fourth, let $\Lambda(B_n)$ denote the volume of the $n$-th tetrahedron.
Surprisingly, the following relationship holds by \textbf{Theorem 2} in Section \ref{combinatorial_simplex}:
\begin{equation}\label{vol_eq}
    \left\{ \frac{\Lambda(B_1)}{\Lambda(B_1)}, \frac{\Lambda(B_2)}{\Lambda(B_1)}, \frac{\Lambda(B_3)}{\Lambda(B_1)}, \frac{\Lambda(B_4)}{\Lambda(B_1)} \right\} = \left\{ 1 , \sum_{n=1}^3 r_n, \sum_{1 \leq n < m \leq 3} r_{n}r_{m}, \prod_{n=1}^{3} r_n \right\}.
\end{equation}
Observe that these are the coefficients of a polynomial whose roots are $r_1$, $r_2$, and $r_3$. Therefore, the common ratios of the geometric sequences can be obtained by solving the polynomial equation shown below:
\begin{equation}
    x^3 - \frac{\Lambda(B_2)}{\Lambda(B_1)} x^2 + \frac{\Lambda(B_3)}{\Lambda(B_1)} x - \frac{\Lambda(B_4)}{\Lambda(B_1)} =0.
\end{equation}
Finally, once the common ratios of the sequences are obtained, we can extract the initial terms by solving a simple linear system of equations.

\subsubsection{A Case of Non-Consecutive Samples}
We have illustrated a simple example of GSD-ST with consecutive samples of $\mathbf{s}$. However, it is important to emphasize that the formal methodology of GSD-ST is more general in that the required samples need not be consecutive.

Let us select non-consecutive elements of $\mathbf{s}$ to construct the simplexes as follows:
\begin{equation}
\begin{split}
   A_1 : & [(0,0,0)^{\text{T}}, (7,3,2439)^{\text{T}}, (3, 21, 7077)^{\text{T}}, (21,39, 20703)^{\text{T}}], \\
   A_2 : & [(0,0,0)^{\text{T}}, (3, 21, 7077)^{\text{T}}, (21,39, 20703)^{\text{T}},(39, 117, 61101)^{\text{T}}], \\
   A_3 : & [(0,0,0)^{\text{T}}, (21,39, 20703)^{\text{T}}, (39, 117, 61101)^{\text{T}}, (117,303, 181239)^{\text{T}}].\\
\end{split}
\end{equation}
Notice that $\mathbf{s}[6]$, i.e., 861, is not sampled, which implies that non-consecutive samples of $\mathbf{s}$ are selected. Nevertheless, by verifying that $\{\Lambda(A_i)\}_{i=0}^2$ is a geometric sequence with a common ratio of $-6$, i.e., $\{96768, -580608, 3483648\}$, we can verify that $k=3$. Next, four vertices are created as follows:
\begin{equation}
    [(7,3,2439)^{\text{T}}, (3, 21, 7077)^{\text{T}}, (21,39, 20703)^{\text{T}}, (39,117,61101)^{\text{T}}].
\end{equation}
Then, similarly in the above example, four tetrahedrons are extracted as follows:
\begin{equation}
\begin{split}
   B_1 : & [(0,0,0)^{\text{T}}, (7,3,2439)^{\text{T}}, (3, 21, 7077)^{\text{T}}, (21,39, 20703)^{\text{T}}], \\
   B_2 : & [(0,0,0)^{\text{T}}, (7, 3, 2439)^{\text{T}}, (3, 21, 7077)^{\text{T}}, (39, 117, 61101)^{\text{T}}], \\
   B_3 : & [(0,0,0)^{\text{T}}, (7, 3, 2439)^{\text{T}}, (21, 39, 20703)^{\text{T}}, (39, 117, 61101)^{\text{T}}], \\
   B_4 : & [(0,0,0)^{\text{T}}, (3, 21, 7077)^{\text{T}}, (21, 39, 20703)^{\text{T}}, (39, 117, 61101)^{\text{T}} ].
\end{split}
\end{equation}
These new tetrahedrons also satisfy \eqref{vol_eq}, although $\mathbf{s}[5]$ and $\mathbf{s}[6]$ (i.e., 303 and 861) are not used. There are many ways of selecting the elements of $\mathbf{s}$ in a non-consecutive manner. However, it is not random and needs to comply with \textbf{Condition 2} described in Section \ref{combinatorial_simplex}.

For simplicity, the example in this section used only real numbers. However, GSD-ST is effective for complex numbers as well. Another example of complex-valued geometric sequences is presented in Appendix A. Furthermore, source code for the examples is presented in \cite{leeCode}.

\section{Methodology of GSD-ST}

\begin{figure*}[!]
\centering
\includegraphics[width=0.97\textwidth]{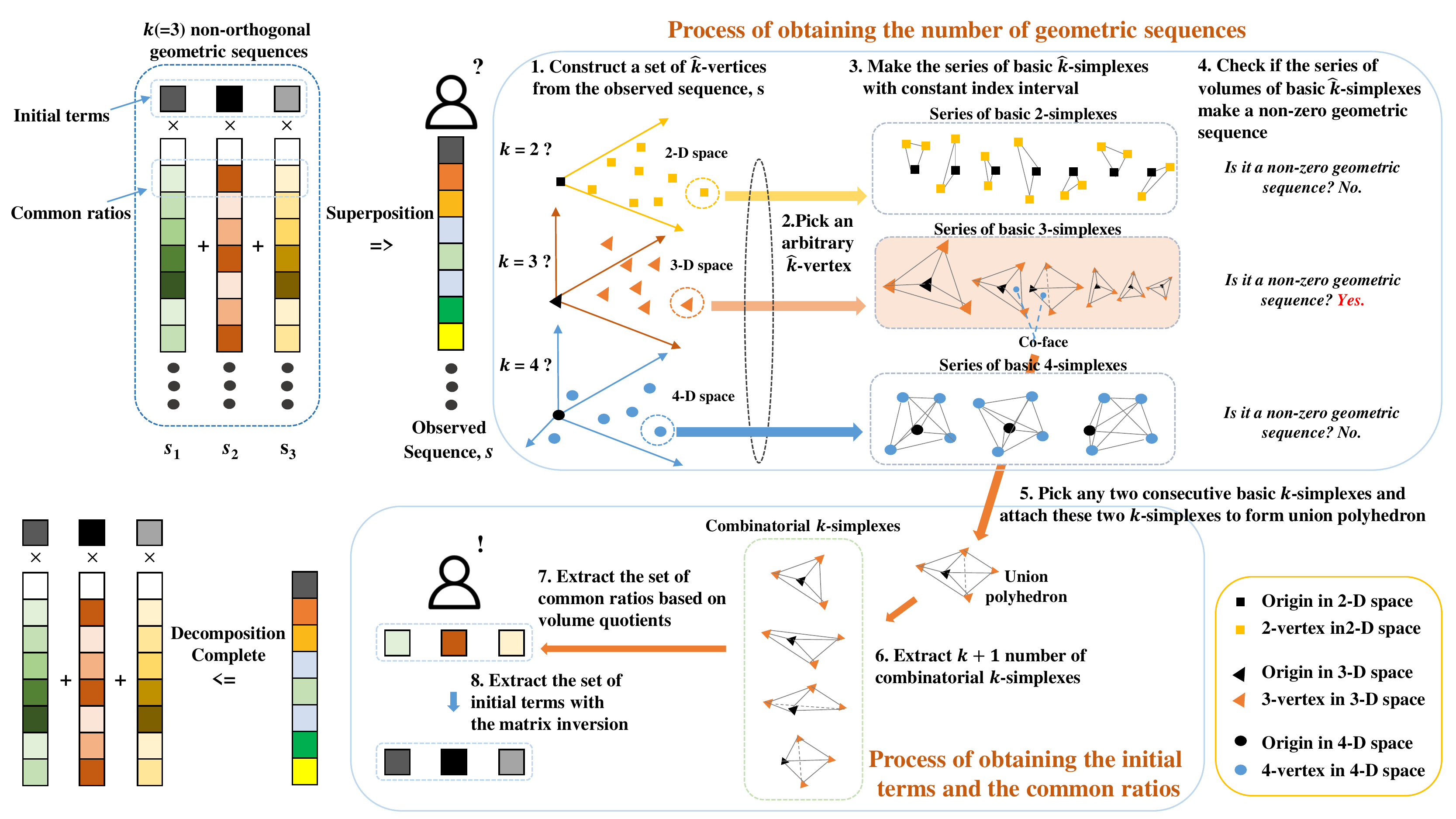}
\caption{Illustration of concept of GSD-ST. It is composed mainly of three steps: i) transformation of $\mathbf{s}$ to the set of $\hat{k}$-vertices, ii) derivation of the series of volumes of the basic $k$-simplexes, and iii) extraction of the volume quotients of the combinatorial $k$-simplexes.}
\label{GSDST}
\end{figure*}

This section provides the general methodology of GSD-ST. The overall concept of GSD-ST is depicted in Fig. \ref{GSDST}.

\subsection{Set of $\hat{k}$-Vertices to Construct a Search Space}
\label{search_space}
At this stage, $\mathbf{s}$ is the only observable sequence. Because $k$ is unknown, let us assume an arbitrary $\hat{k}$. We construct a search space as follows to obtain the unknowns, i.e., $k$, $\mathbf{a}$, and $\mathbf{r}$:

\begin{defn}
For arbitrary $i_c \in \mathbb{N}$ and $\phi_{\hat{k}}$, let a $\hat{k}$-dimensional search space, $\Xi_{i_c}(\phi_{\hat{k},\mathbf{s}})$, be the lexicographically-ordered collection of the vertices that are formed by the successive spawning of new $\hat{k}$-vertices with the index-shifting of $i_c \cdot \mathbf{1}_{\hat{k}}$ starting from $\phi_{\hat{k},\mathbf{s}}$.
\end{defn}

For example, if $\hat{k}=3$, $i_c = 2$, and $\phi_{\hat{k}}=\{0,1,4\}$,
\begin{equation}
    \Xi_{2}(\phi_{{3},(\mathbf{s}[0],\mathbf{s}[1],\mathbf{s}[4])^{\text{T}}}) = \{ (\mathbf{s}[0],\mathbf{s}[1],\mathbf{s}[4])^{\text{T}}, (\mathbf{s}[2],\mathbf{s}[3],\mathbf{s}[6])^{\text{T}}, \cdots \}.
\end{equation}
By the definition of a geometric sequence, $\mathbf{s}[n]$ is a polynomial of degree $n+1$ consisting of the initial terms of degree one and common ratios of degree $n$. Consider the volume of an arbitrary $\hat{k}$-simplex formed by the origin and $\hat{k}$ consecutive vertices in $\Xi_{i_c} (\phi_{\hat{k},\mathbf{s}})$. That is, $\Lambda (e(\Xi_{i_c} (\phi_{\hat{k},\mathbf{s}})[j], \Xi_{i_c} (\phi_{\hat{k},\mathbf{s}})[j+1], \cdots , \Xi_{i_c} (\phi_{\hat{k},\mathbf{s}})[j+\hat{k}-1]))$ for $j \in \mathbb{N}_0$, where $\Xi_{i_c} (\phi_{\hat{k},\mathbf{s}})[j]$ is the $j$-th vertex in $\Xi_{i_c} (\phi_{\hat{k},\mathbf{s}})$. According to Definition 1, it becomes a homogeneous polynomial whose degree is determined by $j$, $\hat{k}$, $i_c$, and $\phi_{\hat{k}}$. This property of algebraic geometry supports the approaches in the following subsections.

\subsection{Series of Basic $\hat{k}$-simplexes to Obtain the Number of Superposed Geometric Sequences}
\label{basic_simplex}
In this subsection, we obtain $k$. To achieve this, let us define a basic $\hat{k}$-simplex as follows:

\begin{defn}
For an arbitrary $j\in \mathbb{N}_0$, a basic $\hat{k}$-simplex is defined by \begin{equation}
e(\Xi_{i_c} (\phi_{\hat{k},\mathbf{s}})[j], \cdots , \Xi_{i_c} (\phi_{\hat{k},\mathbf{s}})[j+\hat{k}-1]) \in \mathbb{C}^{\hat{k} \times (\hat{k}+1)}.
\end{equation}
Furthermore, let ${\xi}_{i_c}(\phi_{\hat{k},\mathbf{s}})$ be a series of the basic $\hat{k}$-simplexes which is formed by the lexicographically ordered collection of the basic $\hat{k}$-simplexes over $j=0,1,\cdots$.
\end{defn}

In addition, recall that $\Lambda({\xi}_{i_c}(\phi_{\hat{k},\mathbf{s}}))$ denotes a series of volumes of basic $\hat{k}$-simplexes. With Definition 2, the search space $\Xi_{i_c}(\phi_{\hat{k},\mathbf{s}})$ exhibits a noteworthy property if we estimate $k$ correctly, i.e., $\hat{k}=k$. This is specified in the following Lemma:

\begin{lem}
Given $\mathbf{s}$, for arbitrary $i_c \in \mathbb{N}$ and $\phi_{\hat{k}}$, $\Lambda ({\xi}_{i_c}(\phi_{\hat{k},\mathbf{s}}))$ is a non-zero geometric sequence whose common ratio is $\Big( \prod_{{n}=1}^{\hat{k}} r_n \Big) ^{i_c}$ if $\hat{k}=k$.
\end{lem}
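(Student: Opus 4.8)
The plan is to expose a hidden matrix factorization that turns the sequence of volumes into a single scalar times a pure power of $j$. Fix $\hat{k}=k$. For each $j\in\mathbb{N}_0$, let $M_j$ be the $k\times k$ matrix whose $t$-th column ($t=0,\dots,k-1$) is the vertex $\Xi_{i_c}(\phi_{k,\mathbf{s}})[j+t]$, so that the $j$-th term of the series of volumes is $\Lambda(\xi_{i_c}(\phi_{k,\mathbf{s}}))[j]=\det(M_j)/k!$. Writing $\mathbf{s}[l]=\sum_{n=1}^{k}a_n r_n^{l}$ and unwinding the index-shift of Definition~1, the $(p,t)$ entry becomes $M_j[p,t]=\sum_{n=1}^{k}a_n r_n^{\phi_k[p]+(j+t)i_c}$, which I would recognize as the matrix product $M_j=U\,V_j$, where $U[p,n]=a_n r_n^{\phi_k[p]}$ is \emph{independent of} $j$ and $V_j[n,t]=\big(r_n^{i_c}\big)^{j+t}$ carries all the $j$-dependence.

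First I would apply multiplicativity of the determinant, $\det(M_j)=\det(U)\det(V_j)$, and note that $\det(U)$ is a constant. Next I would pull the factor $(r_n^{i_c})^{j}$ out of the $n$-th row of $V_j$, giving $\det(V_j)=\big(\prod_{n=1}^{k} r_n^{i_c}\big)^{j}\det(W)$, where $W[n,t]=(r_n^{i_c})^{t}$ is an ordinary Vandermonde matrix in the quantities $r_1^{i_c},\dots,r_k^{i_c}$ and is therefore also constant in $j$. Combining these yields
\[
\Lambda(\xi_{i_c}(\phi_{k,\mathbf{s}}))[j]=\frac{\det(U)\,\det(W)}{k!}\,\Big(\prod_{n=1}^{k} r_n\Big)^{i_c j},
\]
so the $j$-th volume equals a fixed constant multiplied by $\big(\prod_n r_n\big)^{i_c j}$. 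This is exactly a geometric sequence whose common ratio is $\big(\prod_{n=1}^{k} r_n\big)^{i_c}$, which establishes the claimed ratio.

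It then remains to certify that the sequence is \emph{non-zero}, that is, that the leading constant $\det(U)\det(W)$ does not vanish, and this is the step I expect to be the main obstacle. The Vandermonde factor $\det(W)=\prod_{1\le n<m\le k}(r_m^{i_c}-r_n^{i_c})$ is non-zero precisely when the powers $r_1^{i_c},\dots,r_k^{i_c}$ are pairwise distinct, so I would invoke the standing hypothesis of distinct common ratios together with the admissibility of $i_c$ to exclude the degenerate coincidences $r_n^{i_c}=r_m^{i_c}$ with $r_n\neq r_m$. For $\det(U)$ I would factor out the amplitudes, $\det(U)=\big(\prod_{n=1}^{k}a_n\big)\det\!\big([\,r_n^{\phi_k[p]}\,]_{p,n}\big)$, where the amplitudes are non-zero because each $\mathbf{s}_n$ is a genuine geometric sequence.

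The surviving determinant is a generalized Vandermonde (alternant) whose exponents $\phi_k[0]<\cdots<\phi_k[k-1]$ need not equal $0,\dots,k-1$, and here lies the real subtlety. In the consecutive case $\phi_k=\{0,\dots,k-1\}$ it collapses to the ordinary Vandermonde $\prod_{n<m}(r_m-r_n)\neq 0$, so non-vanishing is automatic; for a general admissible $\phi_k$ it factors as a Schur polynomial in $r_1,\dots,r_k$ times that same Vandermonde, and a Schur polynomial can vanish at special (complex) configurations of distinct ratios, so I would either restrict attention to the consecutive indexing or appeal to the paper's non-degeneracy condition on the $r_n$ to guarantee it does not. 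Assembling the non-vanishing of both factors with the geometric-ratio computation above then completes the proof of the lemma.
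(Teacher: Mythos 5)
Your factorization is the paper's own proof in lightly regrouped form: the paper writes $\mathbf{\Omega}_j=\mathbf{\Phi}\,\mathbf{\Sigma}_{\mathbf{a}}\,\mathbf{\Sigma}_{\mathbf{r}}^{i_c j}\,\mathbf{\Psi}^{\text{T}}$, which is exactly your $U V_j$ with $U=\mathbf{\Phi}\mathbf{\Sigma}_{\mathbf{a}}$ and $V_j=\mathbf{\Sigma}_{\mathbf{r}}^{i_c j}\mathbf{\Psi}^{\text{T}}$, and it extracts the common ratio from the single $j$-dependent diagonal factor by the same determinant multiplicativity, arriving at $\big(\det\mathbf{\Sigma}_{\mathbf{r}}\big)^{i_c}=\big(\prod_{n=1}^{k}r_n\big)^{i_c}$. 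The one place you genuinely diverge is the last third of your argument, on non-vanishing, and there you are more careful than the paper: the paper simply forms the quotient $\det(\mathbf{\Omega}_{j+1})/\det(\mathbf{\Omega}_j)$ and asserts the sequence is non-zero without ever checking that $\det(\mathbf{\Omega}_j)\neq 0$. Your decomposition of the leading constant into $\big(\prod_n a_n\big)\cdot\det\big([r_n^{\phi_k[p]}]\big)\cdot\prod_{n<m}(r_m^{i_c}-r_n^{i_c})$ makes the issue explicit, and the degeneracies you flag are real: for instance $r_1=1$, $r_2=-1$, $i_c=2$ gives $r_1^{i_c}=r_2^{i_c}$ and hence identically zero volumes even though the common ratios are distinct, and for non-consecutive $\phi_k$ the alternant factor is a Schur polynomial that can vanish at special complex configurations of distinct $r_n$. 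So the lemma's claim of a \emph{non-zero} sequence for \emph{arbitrary} $i_c$ and $\phi_{\hat{k}}$ genuinely requires the non-degeneracy caveats you state (or a restriction to $i_c=1$ and consecutive indices), and your proposal supplies a justification the paper's proof omits rather than missing one it contains.
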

\begin{proof}
To establish Lemma 1, we utilize the fact that the ratio between any two consecutive samples of a geometric sequence is a constant. Hence, we demonstrate the following for an arbitrary $j\in \mathbb{N}_0$:
\begin{equation}
    \frac{\Lambda(\xi_{i_c}(\phi_{k,\mathbf{s}}))[j+1]}{\Lambda(\xi_{i_c}(\phi_{k,\mathbf{s}}))[j]} = C,
\end{equation}
where $C\in\mathbb{C}$ is a non-zero constant.

To derive $\Lambda(\xi_{i_c}(\phi_{k,\mathbf{s}}))[j]$, let us define $\mathbf{\Omega}_j \in \mathbb{C}^{k \times k}$ as the matrix form of $\xi_{i_c}(\phi_{k,\mathbf{s}})[j]$ excluding the origin. That is, $\mathbf{\Omega}_j$ is the square matrix whose $l$-th column is the coordinate of the $l$-th vertex in $\xi_{i_c}(\phi_{k,\mathbf{s}})[j]$.
Then, \begin{equation}
    {\Lambda(\xi_{i_c}(\phi_{k,\mathbf{s}}))[j] = \frac{1}{k!}\det(\mathbf{\Omega}_j).}
\end{equation}
Recalling the definition of $\phi_k$ and $i_c$, $\mathbf{\Omega}_j$ can be represented as follows:
\begin{gather}\label{vol_j}
    \mathbf{\Omega}_j =
\begin{bmatrix}
    \mathbf{s}[i_cj + \phi_k[0]] & \mathbf{s}[i_cj + \phi_k[0] + i_c] & \cdots & \mathbf{s}[i_cj + \phi_k[0] + (k-1)i_c] \\
    \mathbf{s}[i_cj + \phi_k[1]] & \mathbf{s}[i_cj + \phi_k[1] + i_c] & \cdots & \mathbf{s}[i_cj + \phi_k[1] + (k-1)i_c] \\
    \vdots & \vdots & \ddots & \vdots \\
    \mathbf{s}[i_cj + \phi_k[k-1]] & \mathbf{s}[i_cj + \phi_k[k-1] + i_c] & \cdots & \mathbf{s}[i_cj + \phi_k[k-1] + (k-1)i_c]
\end{bmatrix}
\end{gather}
\\
Here, we can decompose $\mathbf{\Omega}_j$ as follows with regard to the degree of each element:
\begin{gather}\label{vol_j}
    \mathbf{\Omega}_j = \begin{bmatrix}
    \sum_{i=1}^k a_i r_i^{i_cj + \phi_k[0]} & \sum_{i=1}^k a_i r_i^{i_cj + \phi_k[0] + i_c} & \cdots & \sum_{i=1}^k a_i r_i^{i_cj + \phi_k[0] + (k-1)i_c} \\
    \sum_{i=1}^k a_i r_i^{i_cj + \phi_k[1]} & \sum_{i=1}^k a_i r_i^{i_cj + \phi_k[1] + i_c} & \cdots & \sum_{i=1}^k a_i r_i^{i_cj + \phi_k[1] + (k-1)i_c} \\
    \vdots & \vdots & \ddots & \vdots \\
    \sum_{i=1}^k a_i r_i^{i_cj + \phi_k[k-1]} & \sum_{i=1}^k a_i r_i^{i_cj + \phi_k[k-1] + i_c} & \cdots & \sum_{i=1}^k a_i r_i^{i_cj + \phi_k[k-1] + (k-1)i_c}
\end{bmatrix} \\
    =
\begin{bmatrix}
    r_1^{\phi_k [0]} & \cdots & r_k^{\phi_k [0]}  \\
    r_1^{{\phi_k [1]}} & \cdots & r_k^{\phi_k [1]} \\
    \vdots & \ddots & \vdots\\
    r_1^{{\phi_k [k-1]}} & \cdots &  r_k^{{\phi_k [k-1]}}
\end{bmatrix}
\cdot
\begin{bmatrix}
    a_1 &  \cdots & 0\\
    \vdots & \ddots & \vdots\\
    0 & \cdots &  a_k
\end{bmatrix}
\cdot
\begin{bmatrix}
    r_1 &  \cdots & 0\\
    \vdots & \ddots & \vdots\\
    0 & \cdots & r_k
\end{bmatrix} ^{i_cj}
\cdot
\begin{bmatrix} \label{finalD}
    r_1^{0} & \cdots & r_k^0  \\
    r_1^{i_c} & \cdots & r_k^{i_c} \\
    \vdots & \ddots & \vdots\\
    r_1^{(k-1)i_c} & \cdots &  r_k^{(k-1)i_c}
\end{bmatrix}^{\text{T}}.
\end{gather}
For brevity, let $\mathbf{\Phi}$, $\mathbf{\Sigma}_{\mathbf{a}}$, $\mathbf{\Sigma}_{\mathbf{r}}$, and $\mathbf{\Psi}$ denote the above four factors, respectively, as follows: \\ \begin{gather}\label{finalD2}
\mathbf{\Phi} =
  \begin{bmatrix}
    r_1^{\phi_k [0]} & \cdots & r_k^{\phi_k [0]}  \\
    r_1^{{\phi_k [1]}} & \cdots & r_k^{\phi_k [1]} \\
    \vdots & \ddots & \vdots\\
    r_1^{{\phi_k [k-1]}} & \cdots &  r_k^{{\phi_k [k-1]}}
\end{bmatrix},
  \mathbf{\Sigma}_{\mathbf{a}} = \begin{bmatrix}
    a_1 &  \cdots & 0\\
    \vdots & \ddots & \vdots\\
    0 & \cdots &  a_k
\end{bmatrix},
  \mathbf{\Sigma}_{\mathbf{r}} = \begin{bmatrix}
    r_1 &  \cdots & 0\\
    \vdots & \ddots & \vdots\\
    0 & \cdots & r_k
\end{bmatrix}
,
  \mathbf{\Psi} = \begin{bmatrix}
    r_1^{0} & \cdots & r_k^0  \\
    r_1^{i_c} & \cdots & r_k^{i_c} \\
    \vdots & \ddots & \vdots\\
    r_1^{(k-1)i_c} & \cdots &  r_k^{(k-1)i_c}
\end{bmatrix}.
\end{gather}
Therefore, the decomposition in \eqref{finalD} can be represented by
$\mathbf{\Omega}_j =
    \mathbf{\Phi}
    \mathbf{\Sigma}_{\mathbf{a}}
    \mathbf{\Sigma}_{\mathbf{r}}^{i_cj}
    \mathbf{\Psi}^{\text{T}}$.
Observe that only $\mathbf{\Sigma}_{\mathbf{r}}^{i_c j}$ depends on $j$. Hence, the following equality is satisfied for any $j$:
\begin{equation}\label{detRatio}
\frac{\Lambda ({\xi}_{i_c}(\phi_{k,\mathbf{s}}))[j+1]}{\Lambda ({\xi}_{i_c}(\phi_{k,\mathbf{s}}))[j]} = \frac{\det(\mathbf{\Omega}_{j+1})/{k!}}{\det(\mathbf{\Omega}_{j})/{k!}} = \frac{\det((\mathbf{\Sigma}_{\mathbf{r}})^{i_c(j+1)})}{\det((\mathbf{\Sigma}_{\mathbf{r}})^{i_c j})} = \Big(\det(\mathbf{\Sigma_{r}})\Big)^{i_c} =\Big( \prod_{n =1}^k r_n \Big) ^{i_c}.
\end{equation}
Therefore, we can verify that $\Lambda(\xi_{i_c}(\phi_{k,\mathbf{s}}))$ is a geometric sequence.
\end{proof}

It is a noteworthy and effective property that $k$ non-orthogonally superposed geometric sequences can be transformed into a geometric sequence. The fact that $\Lambda ({\xi}_{i_c}(\phi_{k,\mathbf{s}}))$ is a geometric sequence, regardless of $i_c$ and $\phi_{k}$, can be utilized for acquiring $k$.

\begin{thm}
Given $\mathbf{s}$, for arbitrary $i_c \in \mathbb{N}$ and $\phi_{\hat{k}}$, $\Lambda({\xi}_{i_c}(\phi_{\hat{k},\mathbf{s}}))$ is a non-zero geometric sequence if and only if $\hat{k}=k$.
\end{thm}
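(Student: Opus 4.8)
The plan is to reduce the whole statement to a single master identity that expresses the volume sequence as a superposition of geometric sequences indexed by the size-$\hat{k}$ subsets of $\{1,\dots,k\}$, and then to read off the three regimes $\hat{k}>k$, $\hat{k}=k$, $\hat{k}<k$ directly from it. Since Lemma 1 already supplies the forward implication ($\hat{k}=k \Rightarrow$ non-zero geometric), I only need the converse, which I would establish in contrapositive form: if $\hat{k}\neq k$, then $\Lambda(\xi_{i_c}(\phi_{\hat{k},\mathbf{s}}))$ fails to be a non-zero geometric sequence.

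First I would generalize the factorization used in Lemma 1 to arbitrary $\hat{k}$. Writing each entry of $\mathbf{\Omega}_j$ as $\mathbf{s}[i_c j+\phi_{\hat{k}}[p]+q\,i_c]=\sum_{i=1}^{k} r_i^{\phi_{\hat{k}}[p]}\,(a_i r_i^{i_c j})\,r_i^{q i_c}$, I obtain $\mathbf{\Omega}_j=\tilde{\mathbf{\Phi}}\,\mathbf{D}_j\,\tilde{\mathbf{\Psi}}^{\text{T}}$, where $\tilde{\mathbf{\Phi}}\in\mathbb{C}^{\hat{k}\times k}$ carries the entries $r_i^{\phi_{\hat{k}}[p]}$, the diagonal matrix $\mathbf{D}_j\in\mathbb{C}^{k\times k}$ carries $a_i r_i^{i_c j}$, and $\tilde{\mathbf{\Psi}}\in\mathbb{C}^{\hat{k}\times k}$ carries $r_i^{q i_c}$. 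This is exactly the three-factor structure of Lemma 1, except that the outer factors are now rectangular. Applying the Cauchy--Binet formula to this product yields the master identity
\begin{equation}
\Lambda(\xi_{i_c}(\phi_{\hat{k},\mathbf{s}}))[j] = \frac{1}{\hat{k}!}\det(\mathbf{\Omega}_j) = \frac{1}{\hat{k}!}\sum_{\substack{S\subseteq\{1,\dots,k\}\\ |S|=\hat{k}}} c_S\,\rho_S^{\,j},
\end{equation}
where $c_S:=\big(\prod_{i\in S}a_i\big)\det(\tilde{\mathbf{\Phi}}_S)\det(\tilde{\mathbf{\Psi}}_S)$ is independent of $j$ (here $\tilde{\mathbf{\Phi}}_S,\tilde{\mathbf{\Psi}}_S$ are the $\hat{k}\times\hat{k}$ submatrices formed by the columns indexed by $S$), and $\rho_S:=\big(\prod_{i\in S}r_i\big)^{i_c}$. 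Thus the volume sequence is itself a sum of geometric sequences, one per subset $S$.

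The three cases then fall out. For $\hat{k}>k$ the sum is empty, since a $k$-element set has no $\hat{k}$-subset; equivalently $\text{rank}(\mathbf{\Omega}_j)\le k<\hat{k}$, so $\det(\mathbf{\Omega}_j)\equiv 0$ and the volume sequence is identically zero, hence not a non-zero geometric sequence. For $\hat{k}=k$ only $S=\{1,\dots,k\}$ survives, recovering Lemma 1. For $\hat{k}<k$ there are $\binom{k}{\hat{k}}\ge 2$ subsets; after grouping terms by the distinct values $\mu$ attained by $\rho_S$ and collecting their coefficients into $\gamma_\mu$, I would invoke the standard linear independence of distinct nonzero geometric sequences: if $\sum_\mu \gamma_\mu \mu^j$ were a non-zero geometric sequence with ratio $C$, then $\sum_\mu \gamma_\mu(\mu-C)\mu^j=0$ for all $j$ forces $\gamma_\mu(\mu-C)=0$, so at most one $\gamma_\mu$ can be nonzero. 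It therefore suffices to exhibit at least two distinct values $\mu$ with $\gamma_\mu\neq 0$.

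The hard part will be precisely this last step --- ruling out a degenerate collapse in which coefficient cancellations, or collisions among the products $\prod_{i\in S}r_i$ after raising to the $i_c$-th power, leave only a single surviving common ratio. Under the standing assumptions already required for Lemma 1 (the $a_n$ nonzero and the $r_n$ distinct and nonzero, so that each generalized-Vandermonde factor $\det(\tilde{\mathbf{\Phi}}_S)$ and each Vandermonde factor $\det(\tilde{\mathbf{\Psi}}_S)$ is nonzero, whence every $c_S\neq 0$), no individual term vanishes. I would then argue that the products $\prod_{i\in S}r_i$ cannot all coincide when $\hat{k}<k$ --- for $\hat{k}=1$ they are the distinct $r_n$ themselves, and in general equality across all $\hat{k}$-subsets would force the $r_n$ to be equal --- so at least two distinct $\mu$ carry nonzero weight. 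Making this robust to the exponent $i_c$ and to accidental coefficient cancellation is the technical crux, and is where any additional genericity hypothesis on $\mathbf{r}$ (as later codified in the paper's Conditions) would enter.
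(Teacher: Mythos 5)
Your proof follows the paper's overall skeleton---the same rectangular extension of Lemma 1's factorization $\mathbf{\Omega}_j=\mathbf{\Phi}\mathbf{\Sigma}_{\mathbf{a}}\mathbf{\Sigma}_{\mathbf{r}}^{i_c j}\mathbf{\Psi}^{\text{T}}$, and the same dimension-counting disposal of $\hat{k}>k$---but your handling of the critical case $\hat{k}<k$ is genuinely different and considerably sharper. The paper's argument there consists of the assertion that, owing to rank deficiency, $\det(\mathbf{\Sigma}_{\mathbf{r}}^{i_c j})$ ``cannot be the factor of $\det(\mathbf{\Omega}_j)$'' and hence the ratio of consecutive volumes ``is still a function of $j$''; no computation backs this up. Your Cauchy--Binet expansion replaces that assertion with an exact identity, $\det(\mathbf{\Omega}_j)=\sum_{|S|=\hat{k}}c_S\,\rho_S^{\,j}$ with $\rho_S=(\prod_{i\in S}r_i)^{i_c}$, which exhibits the volume sequence as an explicit superposition of $\binom{k}{\hat{k}}$ geometric sequences and reduces the converse to the linear independence of distinct geometric progressions. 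This buys a single master formula covering all three regimes and, more importantly, makes visible exactly what must be checked for the ``only if'' direction to hold.

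The residual difficulty you flag is real, and it is not a weakness of your write-up relative to the paper: the paper's proof silently assumes it away. For fully arbitrary complex $\mathbf{r}$, $i_c$, and $\phi_{\hat{k}}$, the generalized Vandermonde minors $\det(\tilde{\mathbf{\Phi}}_S)$ can vanish (e.g.\ $\phi_2=\{0,2\}$ with $r_2=-r_1$), the products $\prod_{i\in S}r_i$ can collide after raising to the $i_c$-th power (e.g.\ $i_c=2$ with $r_2=-r_3$), and colliding terms can cancel for suitable $a_i$, in which case the sum can degenerate to a single non-zero geometric sequence and the converse fails as literally stated. So the theorem's ``for arbitrary $i_c$ and $\phi_{\hat{k}}$'' quantifier needs a genericity hypothesis on $\mathbf{r}$ (or a restriction to $i_c=1$, consecutive $\phi_{\hat{k}}$, and ratios with distinct elementary symmetric products); your proof identifies precisely which hypothesis, whereas the paper's does not. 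Modulo making that hypothesis explicit, your argument is complete and strictly more rigorous than the published one.
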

\begin{proof}
Lemma 1 provides that $\Lambda({\xi}_{i_c}(\phi_{\hat{k},\mathbf{s}}))$ is a non-zero geometric sequence if $\hat{k}=k$.
Additionally, we present its inverse as follows:
\begin{enumerate}[label=\roman*)]
    \item $\hat{k}<k$:
    recalling \eqref{finalD} and \eqref{finalD2}, $\mathbf{\Omega}_j$ can also be decomposed as $\mathbf{\Phi}
    \mathbf{\Sigma}_{\mathbf{a}}
    \mathbf{\Sigma}_{\mathbf{r}}^{i_cj}
    \mathbf{\Psi}^{\text{T}}$ even if $\hat{k}<k$. However, in this case, $\mathbf{\Phi} \in \mathbb{C}^{\hat{k}\times k}$ and $\mathbf{\Psi}\in \mathbb{C}^{\hat{k}\times k}$ lead to an underdetermined system. Thus, $\det( \mathbf{\Sigma}_{\mathbf{r}}^{i_cj})$ cannot be the factor of $\det(\mathbf{\Omega}_j)$ owing to the rank deficiency. Therefore, $\frac{\det(\mathbf{\Omega}_{j+1})}{\det(\mathbf{\Omega}_j)}$ is still a function of $j$. Hence, $\Lambda({\xi}_{i_c}(\phi_{\hat{k},\mathbf{s}}))$ is not a geometric sequence.
    \item $\hat{k}>k$: $\hat{k}$-simplexes cannot be represented by $k$ linearly independent bases. Therefore, the corresponding volumes become zero, i.e., $  \Lambda({\xi}_{i_c}(\phi_{\hat{k},\mathbf{s}})) $ is an all-zero sequence.
\end{enumerate}
\end{proof}

To be able to examine whether $\hat{k}=k$, the following condition should be fulfilled:
\begin{con}\label{con1}
Given $\mathbf{s}$, $\text{card}(\Xi_{i_c}(\phi_{\hat{k},\mathbf{s}}))$ must be larger than $k+1$ for obtaining $k$ regardless of $i_c$ and $\phi_{\hat{k}}$.
\end{con}

As a special case of Condition 1, we derive the minimum number of samples to obtain $k$, i.e., the minimum required $P$.

\begin{cor}
Given $\mathbf{s}$, the minimum required $P$ to obtain $k$ is $2k+1$.
\end{cor}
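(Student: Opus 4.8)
The plan is to treat the corollary as the tightest instance of Condition 1, specialized to the test at $\hat{k}=k$, and then to count how many samples of $\mathbf{s}$ the most economical search space consumes. Recall from Theorem 1 that deciding whether $\hat{k}=k$ amounts to testing whether $\Lambda(\xi_{i_c}(\phi_{\hat{k},\mathbf{s}}))$ is a non-zero geometric sequence. Verifying that a sequence is geometric requires comparing at least two consecutive ratios, hence at least three successive volumes; this is exactly the content of Condition 1, which demands $\text{card}(\Xi_{i_c}(\phi_{\hat{k},\mathbf{s}})) \geq k+2$ so that $j=0,1,2$ all yield valid basic $\hat{k}$-simplexes. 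First I would fix $\hat{k}=k$ and argue that three volumes suffice: by Theorem 1 the three volumes lie on a geometric progression precisely when $\hat{k}=k$, while for $\hat{k}<k$ the consecutive ratios remain $j$-dependent and for $\hat{k}>k$ the volumes vanish, so the first $\hat{k}$, scanning upward from $1$, that produces a non-zero geometric triple is $k$ itself.

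Next I would convert the cardinality requirement into a sample count. Writing out the vertices of $\Xi_{i_c}(\phi_{\hat{k},\mathbf{s}})$ explicitly, the $m$-th vertex has largest index $i_c m + \phi_{\hat{k}}[\hat{k}-1]$, so the top sample index used by a collection of $k+2$ vertices (indices $m=0,\dots,k+1$) is $i_c(k+1) + \phi_{k}[k-1]$. To obtain $k$ with the fewest samples one minimizes this over admissible configurations, which forces the densest stride $i_c=1$ and the consecutive seed $\phi_{k}=\{0,1,\dots,k-1\}$, for which $\phi_{k}[k-1]=k-1$. Substituting gives a largest index of $(k+1)+(k-1)=2k$, i.e. the samples $\mathbf{s}[0],\dots,\mathbf{s}[2k]$, so $P=2k+1$ is sufficient. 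All tests for smaller $\hat{k}<k$ consume only $2\hat{k}+1\leq 2k-1$ samples, hence they are already available once $P=2k+1$, and this establishes sufficiency.

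To finish I would prove necessity by showing that $P=2k$ fails. With only the samples $\mathbf{s}[0],\dots,\mathbf{s}[2k-1]$ available, the densest admissible configuration ($i_c=1$, $\phi_{k}=\{0,\dots,k-1\}$) produces vertices only up to $m=k$, giving $\text{card}(\Xi_{1}(\phi_{k,\mathbf{s}}))=k+1$; any larger $i_c$ or non-consecutive $\phi_{k}$ only shortens the collection further. This violates Condition 1, so at most two basic $k$-simplexes, and thus a single volume ratio, can be formed, which is insufficient to certify that $\Lambda(\xi_{1}(\phi_{k,\mathbf{s}}))$ is geometric: one ratio cannot distinguish $\hat{k}=k$ from $\hat{k}=k-1$. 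Hence no configuration on $2k$ samples determines $k$, and combined with the sufficiency argument the minimum required $P$ equals $2k+1$.

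The step I expect to be the main obstacle is the minimization over $(i_c,\phi_{\hat{k}})$: I must argue cleanly that, among all admissible search spaces satisfying Condition 1, the consecutive-seed, unit-stride configuration simultaneously minimizes the largest sample index, and that relaxing either $i_c>1$ or a non-consecutive $\phi_{k}$ strictly increases the sample footprint. Tying this minimization to the $\hat{k}=k$ test rather than to some larger $\hat{k}$, and confirming that the smaller-$\hat{k}$ tests never dominate the sample count, is the delicate bookkeeping that makes $2k+1$ the exact threshold.
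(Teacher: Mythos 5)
Your proof is correct and follows essentially the same route as the paper: specialize Condition 1 to the densest configuration ($i_c=1$, $\phi_k=\{0,1,\dots,k-1\}$), observe that $\mathrm{card}(\Xi)\geq k+2$ forces the samples $\mathbf{s}[0],\dots,\mathbf{s}[2k]$, and conclude $P=2k+1$. You supply more detail than the paper does --- in particular the explicit necessity argument that $P=2k$ yields only one volume ratio and hence cannot certify geometricity --- but the underlying argument is the same.
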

\begin{proof}
The required $P$ is minimized when we minimize the number of unsampled elements in $s$. This implies that $i_c=1$ and $\phi_k=\{0,1,\cdots, k-1 \}$. By applying Condition 1 to this setting, the minimum $\Xi_{i_c}(\phi_{k,\mathbf{s}})$ is composed of $\{\mathbf{s}[0], \mathbf{s}[1], \cdots , \mathbf{s}[2k]\}$. This concludes the proof.
\end{proof}

\subsection{Combinatorial $k$-simplexes to Extract Initial Terms and Common Ratios}
\label{combinatorial_simplex}

After obtaining $k$ in $\mathbf{s}$, we can specify the search space for extracting $\mathbf{a}$ and $\mathbf{r}$.

\begin{defn}
For an arbitrary $\phi_{k}$, let the $j$-th series of combinatorial $k$-simplexes, ${\varkappa}_j(\phi_{k,\mathbf{s}})$, be the output of the following process:
\begin{enumerate} [label=\roman*)]
    \item Fix the search space to $\Xi_1 (\phi_{k,\mathbf{s}})$ and construct ${\xi}_1 (\phi_{k,\mathbf{s}})$ over ${\Xi}_1 (\phi_{k,\mathbf{s}})$.
    \item Pick any two consecutive basic $k$-simplexes such as $\xi_1 (\phi_{k,\mathbf{s}}){[j]}$ and $\xi_1 (\phi_{k,\mathbf{s}}){[j+1]}$ from ${\xi}_1 (\phi_{k,\mathbf{s}})$.
    \item Paste these two $k$-simplexes to create a new polyhedron having $k+2$ vertices, which we call the $j$-th \emph{union polyhedron} \footnote{These two consecutive basic $k$-simplexes can be attached because they share the coface made by $k$ $k$-vertices based on the definition of ${\xi}_{1}(\phi_{k,\mathbf{s}})$.}.
    \item Extract lexicographically ordered $k+1$ $k$-simplexes out of the $j$-th union polyhedron.
\end{enumerate}
\end{defn}

We further define the volume quotients of the combinatorial $k$-simplexes with the $j$-th union polyhedron, $\mathbf{v}_j(\phi_{k,\mathbf{s}})$, as follows:
\begin{equation}\label{vj}
 \mathbf{v}_j(\phi_{k,\mathbf{s}}):= \{ \frac{\Lambda(\varkappa_j(\phi_{k,\mathbf{s}})[0])}{\Lambda(\varkappa_j(\phi_{k,\mathbf{s}})[0])}, \cdots ,
\frac{\Lambda(\varkappa_j(\phi_{k,\mathbf{s}})[k])}{\Lambda(\varkappa_j(\phi_{k,\mathbf{s}})[0])} \}.
\end{equation}

\begin{thm}
{Given $\mathbf{s}$, regardless of $j$ and $\phi_{k}$, $\mathbf{v}_j(\phi_{k,\mathbf{s}})$ is unique as follows:
\begin{equation}
\{ 1 , \cdots , \sum_{1 \leq i_1 < i_2 < \dots < i_l \leq k}   \big( \prod_{n=1}^l r_{i_n} \big), \cdots , \prod_{n=1}^{k} r_n \}.
\end{equation}}
\end{thm}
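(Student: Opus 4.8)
The plan is to reduce each volume quotient in \eqref{vj} to a ratio of determinants of ``pure common-ratio'' matrices and then to recognize that ratio as an elementary symmetric polynomial in $\mathbf{r}$. First I would write each vertex of the $j$-th union polyhedron in the factored form supplied by the proof of Lemma 1. With the search space fixed to $\Xi_1(\phi_{k,\mathbf{s}})$, the vertex at position $p$ is $\big(\mathbf{s}[p+\phi_k[0]],\dots,\mathbf{s}[p+\phi_k[k-1]]\big)^{\text{T}} = \mathbf{\Phi}\,\mathbf{\Sigma}_{\mathbf{a}}\,\mathbf{c}_p$, where $\mathbf{c}_p := (r_1^{\,p},\dots,r_k^{\,p})^{\text{T}}$ and $\mathbf{\Phi},\mathbf{\Sigma}_{\mathbf{a}}$ are as in \eqref{finalD2}. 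The $k+1$ non-origin vertices of the $j$-th union polyhedron are therefore $\mathbf{c}_{j},\dots,\mathbf{c}_{j+k}$ up to the common left factor $\mathbf{\Phi}\mathbf{\Sigma}_{\mathbf{a}}$, and the $m$-th combinatorial $k$-simplex $\varkappa_j(\phi_{k,\mathbf{s}})[m]$ ($m=0,\dots,k$) is the one obtained by selecting the $k$ columns indexed by $\{j,\dots,j+k\}\setminus\{j+s_m\}$ for a unique omitted offset $s_m$. Hence its defining matrix factors as $\mathbf{\Phi}\mathbf{\Sigma}_{\mathbf{a}}\mathbf{C}_m$ with $\mathbf{C}_m=[\mathbf{c}_{p_1},\dots,\mathbf{c}_{p_k}]$, so that $\Lambda(\varkappa_j(\phi_{k,\mathbf{s}})[m]) = \tfrac{1}{k!}\det(\mathbf{\Phi})\det(\mathbf{\Sigma}_{\mathbf{a}})\det(\mathbf{C}_m)$.

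Second, I would form the quotients. Because $\det(\mathbf{\Phi})\det(\mathbf{\Sigma}_{\mathbf{a}})$ is a nonzero common factor of every $\Lambda(\varkappa_j(\phi_{k,\mathbf{s}})[m])$ (nonzero by distinctness of the $r_n$ and $a_n\neq 0$, exactly as used in Lemma 1), it cancels, leaving $\Lambda(\varkappa_j[m])/\Lambda(\varkappa_j[0]) = \det(\mathbf{C}_m)/\det(\mathbf{C}_0)$. This single cancellation already explains the independence from $\phi_{k}$, since the entire $\phi_k$-dependence resides in $\mathbf{\Phi}$. Factoring $r_i^{\,j}$ out of row $i$ of each $\mathbf{C}_m$ contributes the same factor $\prod_{i}r_i^{\,j}$ to every determinant, so it cancels as well; this gives independence from $j$ and reduces $\det(\mathbf{C}_m)$ to an incomplete Vandermonde determinant $\det\big[r_i^{\,e}\big]$ whose exponent set is $\{0,1,\dots,k\}\setminus\{s_m\}$.

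Third — and this is the crux — I would evaluate the ratio of this incomplete Vandermonde determinant to the ordinary one (the case $s_m=k$, i.e.\ $\varkappa_j[0]$). The cleanest self-contained route is a Laplace/cofactor expansion of the full $(k+1)\times(k+1)$ Vandermonde determinant $W(x)$ in the variables $r_1,\dots,r_k,x$ with columns $x^0,\dots,x^k$. On one hand $W(x)=\big(\prod_{1\le i<i'\le k}(r_{i'}-r_i)\big)\prod_{i=1}^k(x-r_i)$; on the other, expanding along the $x$-row writes it as $\sum_{s=0}^{k}(-1)^{k+s}x^{s}\,M_s$, where $M_s=\det\big[r_i^{\,e}:e\in\{0,\dots,k\}\setminus\{s\}\big]$. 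Matching the coefficient of $x^{s}$ on both sides (using $\prod_i(x-r_i)=\sum_{s}(-1)^{k-s}e_{k-s}(\mathbf{r})\,x^{s}$, where $e_l$ denotes the $l$-th elementary symmetric polynomial) yields $M_s = \big(\prod_{i<i'}(r_{i'}-r_i)\big)\,e_{k-s}(\mathbf{r})$, the sign bookkeeping collapsing because $(-1)^{(k+s)-(k-s)}=1$. Dividing by the $m=0$ case (for which $s_0=k$ and $e_0=1$) gives $\det(\mathbf{C}_m)/\det(\mathbf{C}_0)=e_{k-s_m}(\mathbf{r})$. Finally I would settle the lexicographic bookkeeping: the $m$-th extracted combinatorial $k$-simplex omits offset $s_m=k-m$, so $\mathbf{v}_j(\phi_{k,\mathbf{s}})[m]=e_{m}(\mathbf{r})=\sum_{1\le i_1<\cdots<i_m\le k}\prod_{n=1}^{m}r_{i_n}$, which is precisely the claimed vector $\{1,\dots,\prod_{n=1}^k r_n\}$. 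The main obstacle is this third step; the identity is the classical bialternant formula for the single-column Schur function $s_{(1^{k-s})}=e_{k-s}$, but routing it through the Vandermonde cofactor expansion avoids invoking symmetric-function machinery and keeps the signs transparent.
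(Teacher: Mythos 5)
Your proposal is correct, and its overall architecture coincides with the paper's: both factor the union-polyhedron matrix as $\mathbf{\Phi}\mathbf{\Sigma}_{\mathbf{a}}\mathbf{\Sigma}_{\mathbf{r}}^{j}$ times a pure exponent matrix, observe that the left factors (and hence all dependence on $\phi_k$, $\mathbf{a}$, and $j$) cancel in the volume quotients, and reduce the claim to evaluating ratios of $k\times k$ minors of the $(k+1)\times k$ matrix $\bigl[r_i^{\,e}\bigr]_{e=0,\dots,k}$. Where you genuinely diverge is in the crux step of evaluating those incomplete Vandermonde minors. The paper replaces $r_k$ by an auxiliary variable \emph{inside each minor}, identifies $r_1,\dots,r_{k-1}$ as roots, factors, and recurses row by row; it carries this out explicitly only for $\mathbf{\Psi}_0$ and $\mathbf{\Psi}_1$ (the latter with an asserted form of the leading coefficient $q(x^1)$) and then states ``repeat the above process over all possible $l$.'' You instead expand the single augmented $(k+1)\times(k+1)$ Vandermonde $W(x)$ along its $x$-row and match coefficients of $x^{s}$ against $\prod_i(x-r_i)$, obtaining $M_s=\bigl(\prod_{i<i'}(r_{i'}-r_i)\bigr)e_{k-s}(\mathbf{r})$ for all $s$ at once. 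This buys a uniform, sign-transparent treatment of every combinatorial simplex in one identity, closing the ``repeat'' gap in the paper's induction; the paper's route, in exchange, is more elementary in that it never needs the Laplace expansion or the explicit expansion of $\prod_i(x-r_i)$ into elementary symmetric polynomials. Your lexicographic bookkeeping ($\varkappa_j[m]$ omitting offset $k-m$, hence quotient $e_m(\mathbf{r})$) matches the paper's ordering, and your implicit nondegeneracy assumptions ($a_n\neq 0$, distinct $r_n$, $\det(\mathbf{\Phi})\neq 0$) are exactly those the paper also relies on without comment.
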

\begin{proof}
{See Appendix B.}
\end{proof}

By Theorem 2, $\mathbf{v}_j(\phi_{k,\mathbf{s}})$ is unique for a given $\mathbf{s}$, and therefore, can be simplified as $\mathbf{v}(k, \mathbf{s})$. This uniqueness is a strong property because it implies that all the search spaces contain identical information regardless of the selection of the initial $k$-vertex.

Based on the information of $\mathbf{v}(k, \mathbf{s})$, we can construct a polynomial equation for $\mathbf{r}$ as follows:
\begin{equation}\label{poly_eq}
    \sum_{n=0} ^k \big(    (-1)^{k-n} \cdot \mathbf{v}(k,\mathbf{s})[k-n]\cdot r^{n} \big) = 0.
\end{equation}
The roots of \eqref{poly_eq} are the common ratios $\mathbf{r}$, which we seek. Furthermore, the minimum number of samples for extracting $\mathbf{r}$ can be described as follows:

\begin{cor}
Given $k$, the most compact sampling for extracting $\mathbf{r}$ is to take $2k$ consecutive samples of the sequence $\mathbf{s}$.
\end{cor}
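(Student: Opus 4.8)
The plan is to reduce the statement to a counting problem about which entries of $\mathbf{s}$ actually enter the extraction of $\mathbf{r}$, and then to minimize that count subject to the structural constraints imposed by Definition 3 and Theorem 2. First I would observe that, by Theorem 2, a single union polyhedron already yields the full quotient vector $\mathbf{v}(k,\mathbf{s})$, hence all $k+1$ coefficients of the polynomial in \eqref{poly_eq} whose roots are $\mathbf{r}$; so extracting $\mathbf{r}$ requires constructing exactly one $j$-th series of combinatorial $k$-simplexes. Since Definition 3 fixes the search space to $\Xi_1(\phi_{k,\mathbf{s}})$, the index shift is pinned to $i_c=1$, and the only remaining freedom is the choice of $\phi_k$ and of the union index $j$.

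Next I would make the sample footprint explicit. The $j$-th union polyhedron is assembled from the two consecutive basic $k$-simplexes $\xi_1(\phi_{k,\mathbf{s}})[j]$ and $\xi_1(\phi_{k,\mathbf{s}})[j+1]$, i.e.\ from the $k+1$ vertices $\Xi_1(\phi_{k,\mathbf{s}})[j],\dots,\Xi_1(\phi_{k,\mathbf{s}})[j+k]$, and the vertex indexed by $m$ reads off the entries $\mathbf{s}[\phi_k[l]+m]$ for $l=0,\dots,k-1$. Hence the set of distinct samples consumed is exactly the integer sumset $\phi_k + \{j,j+1,\dots,j+k\}$, where $\phi_k$ contributes $k$ elements and the shift set contributes $k+1$ elements.

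The crux is then a sumset cardinality estimate: for finite nonempty sets of integers $A,B$ one has $|A+B|\ge |A|+|B|-1$, so the number of required samples is at least $k+(k+1)-1=2k$, with equality if and only if $A$ and $B$ are arithmetic progressions sharing a common difference. Because the shift set $\{j,\dots,j+k\}$ is a run of consecutive integers (common difference $1$), equality forces $\phi_k$ to be consecutive as well. I would finish by exhibiting the optimal configuration $\phi_k=\{0,1,\dots,k-1\}$ with $j=0$, for which the footprint collapses to $\{0,1,\dots,2k-1\}$ — precisely $2k$ consecutive samples of $\mathbf{s}$ — thereby establishing both that $2k$ samples are necessary and that the compact, consecutive choice attains the bound.

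The main obstacle I anticipate is not the algebra but the bookkeeping needed to make the reduction airtight: one must argue that no cleverer assembly can evade the sumset lower bound — in particular that a single union polyhedron is genuinely required to populate every coefficient of \eqref{poly_eq}, and that all $k+1$ of its non-origin vertices must be distinct — so that the minimization really ranges over the footprint $\phi_k+\{j,\dots,j+k\}$ and the equality case of the sumset inequality delivers minimality and consecutiveness simultaneously.
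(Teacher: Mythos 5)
Your proposal is correct, and it reaches the same destination by the same basic reduction as the paper --- both arguments boil the question down to counting the distinct entries of $\mathbf{s}$ consumed by the $k+1$ vertices $\Xi_1(\phi_{k,\mathbf{s}})[j],\dots,\Xi_1(\phi_{k,\mathbf{s}})[j+k]$ of a single union polyhedron --- but you supply a genuinely different, and stronger, final step. The paper's proof simply exhibits the configuration $\phi_k=\{0,1,\dots,k-1\}$, observes that the resulting search space is built from $\mathbf{s}[0],\dots,\mathbf{s}[2k-1]$, and declares this minimal; it never proves a matching lower bound, nor that consecutiveness is forced at the optimum. You instead identify the sample footprint exactly as the sumset $\phi_k+\{j,\dots,j+k\}$ and invoke the elementary inequality $\lvert A+B\rvert\ge\lvert A\rvert+\lvert B\rvert-1$ together with its equality case (both sets arithmetic progressions with a common difference), which yields the lower bound $2k$ \emph{and} the necessity of a consecutive $\phi_k$ in one stroke. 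Your footprint formula is even corroborated by the paper's own non-consecutive example ($\phi_3=\{0,1,7\}$ consumes the nine indices $\{0,\dots,4\}\cup\{7,\dots,10\}$, exactly $\phi_3+\{0,1,2,3\}$). The only caveats are bookkeeping ones you already flag: the optimality claim is relative to the GSD-ST construction of Definition~3 (which pins $i_c=1$ and requires one union polyhedron by Theorem~2 and Condition~2), and for $k=1$ the equality case of the sumset inequality is vacuous but the conclusion holds trivially. What your route buys is an actual proof of minimality; what the paper's buys is brevity.
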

\begin{proof}
This is equivalent to constructing $\mathbf{v}(k, \mathbf{s})$ with the least number of $k$-vertices made by $\phi_k$ of $\{0,1,\cdots, k-1 \}$. Then, the minimum required  $\Xi_{1}(\phi_{k,\mathbf{s}})$ is composed of $\{\mathbf{s}[0], \mathbf{s}[1], \cdots , \mathbf{s}[2k-1]\}$. This concludes the proof.
\end{proof}

After obtaining $\mathbf{r}$, it is trivial to extract $\mathbf{a}$ by simple matrix pseudo-inversion: $\mathbf{a}=\mathbf{R}^+ \mathbf{s}$. Here, $\mathbf{R} \in \mathbb{C}^{P \times k}$ is the matrix constructed by $\mathbf{r}$ and satisfies $\mathbf{R}[m,n]:=r_{n+1}^{m}$ for $m,n \in \mathbf{N}_0$, and $(\cdot)^+$ is the pseudo-inverse operation.
Each pair of initial term and common ratio is matched through this matrix operation. Thus, there is no paring problem between the initial terms and common ratios.

Note that the following condition should be satisfied to obtain $\mathbf{a}$ and $\mathbf{r}$ with the knowledge of $k$.
\begin{con} Given $\mathbf{s}$,
$\text{card}(\Xi_1(\phi_{k,\mathbf{s}}))$ must be larger than $k$ to obtain $\mathbf{a}$ and $\mathbf{r}$ regardless of $\phi_{k}$. \end{con}

The above condition provides a noteworthy characteristic in terms of the non-consecutive and non-uniform sampling.
For example, when $k=3$, the initial terms and common ratios can be extracted by strangely sampled observations such as $\{ \mathbf{s}[0]$, $\mathbf{s}[1]$, $\mathbf{s}[2]$, $\mathbf{s}[3]$, $\mathbf{s}[20]$, $\mathbf{s}[21]$, $\mathbf{s}[22]$, $\mathbf{s}[23]$, $\mathbf{s}[100]$, $\mathbf{s}[101]$, $\mathbf{s}[102]$, $\mathbf{s}[103]\}$, i.e., $\phi_{3,\mathbf{s}}=(\mathbf{s}[0],\mathbf{s}[20],\mathbf{s}[100])^{\text{T}}$. Furthermore, the GSD-ST method can be extended to an arbitrary $k$-polytope. Because all the $k$-simplexes in this work include the origin point, the volume of any $k$-polytope can be represented by addition and/or subtraction operations of the volumes of these $k$-simplexes.

To summarize, starting from an arbitrary $\hat{k}$-vertex, $\phi_{\hat{k},\mathbf{s}}$, we identify $k$ by verifying that a series of volumes of basic ${k}$-simplexes, $\Lambda ({\xi}_{i_c} (\phi_{{k},\mathbf{s}}))$, is a geometric sequence. Then, we obtain $\mathbf{r}$ through the volume quotients of the combinatorial $k$-simplexes, $\mathbf{v}(k,\mathbf{s})$. Finally, $\mathbf{a}$ is obtained by a simple matrix operation. Algorithm 1 depicts the GSD-ST procedure, which consists of two phases: acquisition of $k$ and extraction of $\mathbf{a}$ and $\mathbf{r}$.

\begin{algorithm}[t!]
 \caption{: GSD-ST process, $\mathcal{S}(\cdot)$}\label{alg:rmleft}
 \begin{algorithmic}[1]
\State Set the observed sequence, $\mathbf{s}:=\sum_{i=1}^k \mathbf{s}_i$.
\State [\textit{Phase 1}: {Acquisition of $k$}]
\State Set $\hat{k}$ to 1.
\While {$\Lambda({\xi}_{i_c}(\phi_{\hat{k},\mathbf{s}}))$ is not a geometric sequence}
\State Set $\hat{k}$ to $\hat{k}+1$.
\State Pick arbitrary ${i_c} \in \mathbb{N}$ and $\phi_{\hat{k}}$.
\State Establish $\Xi_{i_c}(\phi_{\hat{k},\mathbf{s}})$ as per Definition 1.
\State Establish ${\xi}_{i_c}(\phi_{\hat{k},\mathbf{s}})$ as per Definition 2.
\State Construct $\Lambda({\xi}_{i_c}(\phi_{\hat{k},\mathbf{s}}))$.
\EndWhile
\State Set $k$ to $\hat{k}$
\State [\textit{Phase 2}: Extraction of $\mathbf{a}$ and $\mathbf{r}$]
\State Construct $\mathbf{v}(k,\mathbf{s})$ from $\Xi_1(\phi_{k,\mathbf{s}})$.
\State Extract $\mathbf{r}$ by finding the roots of \eqref{poly_eq}.
\State Extract $\mathbf{a}$ by $\mathbf{a}=\mathbf{R}^+ \mathbf{s}$ where $\mathbf{R}[m,n]:=r_{n+1}^m$.
 \end{algorithmic}
\end{algorithm}

Let the GSD-ST method be denoted by $\mathcal{S}(\cdot)$. Then,
\begin{equation}
    \mathcal{S}(\mathbf{s}):=\{ (a_1, r_1), \cdots, (a_k, r_k) \},
\end{equation}
where $\mathcal{S}$ is a nonlinear function including the entire process of obtaining $k$ and extracting $\{ (a_1, r_1), \cdots, (a_k, r_k) \}$. In addition, we define the inverse of GSD-ST, $\mathcal{S}^{-1}(\cdot)$, as follows:
\begin{equation}
    \mathcal{S}^{-1}(\{ (a_1, r_1), \cdots, (a_k, r_k) \}):= \mathbf{s}.
\end{equation}
If Conditions 1 and 2 hold, $\mathcal{S}^{-1}(\mathcal{S}(\mathbf{s}))$ is equivalent to $\mathbf{s}$.

\subsection{GSD-ST with Noisy Samples}
Until now, we have assumed that the observation of $\mathbf{s}$ is flawless. However, the observed sequence may be prone to noise, particularly in wireless communications. Therefore, in this section, we present practical methods for mitigating errors in $\mathbf{s}$. Let us define $\mathbf{s}_w$ such that
\begin{equation}
    \mathbf{s}_w:= \mathbf{s} + \mathbf{w} = \{ \mathbf{s}[l] + \mathbf{w}[l] \}_{l=0}^{P-1},
\end{equation}
where $\mathbf{w}$ denotes a sequence of random variables representing additive white Gaussian noise (AWGN). The fundamental approach of the de-noising process is to utilize more samples than the minimum requirement, i.e., to consider $P > 2k+1$.

\subsubsection{Estimation of $k$ with noise}
We can design an approximated algorithm to estimate $k$ by utilizing Theorem 2, i.e., all possible $\mathbf{v}(\hat{k},\mathbf{s}_w) \in \mathbb{C}^{\hat{k}+1}$ are identical when $\hat{k}=k$ and $\mathbf{s}_w=\mathbf{s}$. To extract the informative part of $\mathbf{v}(\hat{k},\mathbf{s}_w)$, $\mathbf{v}_{I,\hat{k}} \in \mathbb{C}^{\hat{k}}$ is defined as $ \mathbf{v}(\hat{k},\mathbf{s}_w) \setminus \mathbf{v}(\hat{k},\mathbf{s}_w)[0] $ because $\mathbf{v}(\hat{k},\mathbf{s}_w)[0]=1$ for any $\mathbf{v}(\hat{k},\mathbf{s}_w)$. Here, $\setminus$ is the operator of set minus.

We use a well-known method based on Euclidean distance to examine the similarity among all possible $\mathbf{v}(\hat{k},\mathbf{s}_w)$. Let $\mathbf{v}_{I,\hat{k}}^{(i)}$ and $\mathbf{v}_{I,\hat{k}}^{(j)}$ be two $\mathbf{v}_{I,\hat{k}}$ among all the possible cases. Then, the similarity function $D(\hat{k},\mathbf{s}_w)$ is defined as follows:
\begin{equation}\label{similarity}
 D(\hat{k}, \mathbf{s}_w):= \Big( \prod_{1 \leq i_c \leq i_U, 1 \leq i  < j \leq {i_K}} ||\mathbf{v}_{I,\hat{k}}^{(i)} - \mathbf{v}_{I,\hat{k}}^{(j)}||\Big) ^ {  \frac{1}{ {  {  \sum_{i_c = 1}^{i_U}{i_K \choose 2 }}} } },
\end{equation}
where $|| \cdot ||$ is the Euclidean norm of an input. Furthermore, $i_U$ is the upper bound of $i_c$, which corresponds to $\floor{\frac{P-\hat{k}}{\hat{k}+1}}$. Here, $\floor{\cdot}$ is the operator of the floor calculation. Given $i_c$, $i_K$ denotes the number of all possible $\mathbf{v}_{I,\hat{k}}$, ${{P-i_c \hat{k}} \choose \hat{k}}$. Thus, $D(\hat{k},\mathbf{s}_w)$ is the geometric mean of the similarity values for all possible $\mathbf{v}_{I,\hat{k}}$. Let $k^*$ be $\hat{k}$ minimizing $D(\hat{k},\mathbf{s}_w)$ and we consider it to be $k$. If $k^*=k$ and $\mathbf{s}_w = \mathbf{s}$, $D(k^*,\mathbf{s}_w)$ becomes zero.

Let $N_d$ be the number of Euclidean distances to be computed. It is given by
\begin{equation}\label{N_d}
  N_d = {  {  \sum_{i_c = 1}^{i_U}{{{P-i_c \hat{k}} \choose \hat{k}} \choose 2 }}}.
\end{equation}
The computation of $D(\hat{k}, \mathbf{s}_w)$ is demanding when the number of $P$ is large. Thus, we define two simplified similarity functions, $D_d(\hat{k},\mathbf{s}_w)$ and $D_r(\hat{k},\mathbf{s}_w)$, as follows:

\begin{enumerate}[label=\roman*)]
    \item $D_{d}(\hat{k},\mathbf{s}_w):=
\Big( \prod_{1 \leq i < j \leq {{{P- \hat{k}} \choose \hat{k}}}} ||\mathbf{v}_{I,\hat{k}}^{(i)} - \mathbf{v}_{I,\hat{k}}^{(j)}||\Big) ^ {  \frac{1}{ { {  {{{P- \hat{k}} \choose \hat{k}} \choose 2 }}} } }
$,
    \item $D_{r}(\hat{k},\mathbf{s}_w):= ||\mathbf{v}_{I,\hat{k}}^{(i)} - \mathbf{v}_{I,\hat{k}}^{(j)})||$.
\end{enumerate}
$D_d(\hat{k},\mathbf{s}_w)$ is simplified by fixing $i_U$ as $1$ in \eqref{N_d}.
Furthermore, $D_r (\hat{k},\mathbf{s}_w)$ is the simplest method that executes only one calculation of Euclidean distance with arbitrary $\mathbf{v}_{I,\hat{k}}^{(i)}$ and $\mathbf{v}_{I,\hat{k}}^{(j)}$.

\subsubsection{Extraction of $\mathbf{a}$ and $\mathbf{r}$ with noise}
The de-noising of $\mathbf{s}_w$ is essentially a process of separating $\mathbf{s}$ and $\mathbf{w}$. Here, $\mathbf{s}$ is an index-wise correlated sequence with $2k$ parameters of interest, whereas $\mathbf{w}$ is a sequence of random variables. We focus on the fact that $\mathbf{s}$ has $k$ non-identical bases. Furthermore, each basis is formed by only one parameter, i.e., the common ratio, from the characteristic of the geometric sequence. From this perspective, the de-noising of $\mathbf{s}_w$ can be handled by suppressing the number of bases for $\mathbf{s}_w$ to $k$.

Thus, we utilize iterative $k$-truncated singular value decomposition (SVD) \cite{hansen1987truncatedsvd} for taking the $k$ largest singular values and their corresponding vectors.
Let ${\mathbf{s}_w}^*$ be the de-noised sequence which is the output of the following process:
\begin{enumerate}[label=\roman*)]
    \item Create a matrix, $\mathbf{Q} \in \mathbb{C}^{P_h \times (P-P_h+1)}$, where $P_h=\floor{\frac{P+1}{2}}$ from $\mathbf{s}_w$. $\mathbf{Q}$ satisfies the condition:
    \begin{equation}
        \mathbf{Q}[m,n]:=\mathbf{s}_w[m+n].
    \end{equation}
    \item Execute the $k$-truncated SVD of $\mathbf{Q}$.
    \item Reconstruct $\mathbf{Q}$ using the $k$-tuples of singular values and vectors.
    \item Transform $\mathbf{Q}$ onto a new $\mathbf{s}_w$ by averaging the values with the same index, i.e.,
    \begin{equation}
        \{ \mathbf{s}_w[l] \}_{l=0}^{P-1} := \frac{1}{q} \sum_{n,m} \mathbf{Q}[m,n], ~ \text{s.t.} ~ m+n=l,
    \end{equation}
    where $q=l+1$ if $l<P_h$, and $q=P-l$ otherwise.
    \item Repeat the above four-step process with the stopping criterion $\epsilon$ until $\mathbf{s}_w$ converges to $\mathbf{s}_w^*$.
\end{enumerate}

The above de-noising process for $\mathbf{s}_w$ is equivalent to making the bases of $\mathbf{Q}$ for both row space and column space identical to each other, with the aim of minimizing the number of parameters in $\mathbf{s}_w$.
Then, we extract $\mathbf{a}$ and $\mathbf{r}$ using $\mathbf{s}_w^*$.
Algorithm 2 summarizes the process of GSD-ST with noisy samples.

\begin{algorithm}[t!]
 \caption{: GSD-ST process with noisy samples}\label{alg:rmleft}
 \begin{algorithmic}[1]
\State [\textit{Phase 1}: {Estimation of $k$}]
\State Determine $k^*$ that minimizes $D(\hat{k},\mathbf{s}_w)$ in \eqref{similarity}.
\State [\textit{Phase 2}: Extraction of $\mathbf{a}$ and $\mathbf{r}$]
\State Obtain $\mathbf{s}_w^*$ by iterative $k$-truncated SVD of $\mathbf{Q}$.
\State Implement Phase 2 of Algorithm 1 for $\mathbf{s}_w^*$.
 \end{algorithmic}
\end{algorithm}

\section{Application of GSD-ST to Non-orthogonal Interference-free radio Access}

\subsection{ {Potential of GSD-ST for Wireless Communications}}

As we discussed in the Introduction, the equidistant samples of a radio wave comprise a geometric sequence. This may not be apparent for the case of quadrature amplitude modulation (QAM) because the radio wave is discontinuous over time. However, if we consider a symbol duration, the modulated signal can be interpreted as a continuous wave that contains the modulation information in the initial term. Hence, an accumulation of radio waves in a symbol duration is equivalent to a superposition of geometric sequences. Therefore, the capability to decompose geometric sequences offers the potential for separating (i.e., demodulating) non-orthogonally superposed radio waves.


The use of orthogonality has been the fundamental method for handling multiple radio waves. Orthogonality is available in various domains such as time, frequency, and space. Recently, orthogonality has been sought in a more sophisticated domain such as codebook \cite{nikopour2013sparse}. In general, there is no guarantee that the proposed GSD-ST would yield a higher performance than the existing multiple access schemes. This is particularly so when it is compared with a well-designed scheme, e.g., sparse code multiple access (SCMA) \cite{nikopour2013sparse}. However, orthogonality is not always feasible in wireless communications. For example, consider random access in cellular systems, where the transmissions of multiple users cannot be coordinated. Typical random access schemes arrange a finite number of orthogonal resources from which each user selects randomly. This inevitably incurs interference owing to collisions regardless of the number of orthogonal resources.

We can pursue a different approach to random access, i.e., non-orthogonal transmissions with GSD-ST. Assume that each transmitter randomly selects its frequency in a specified bandwidth. Here, a fundamental difference from the existing schemes is that the frequency is selected in a continuous domain rather than from a finite grid. Even if the overlapping radio waves are not orthogonal, they can be decomposed as if there were no interference by using only $2k+1$ sampling at a rate faster than the highest frequency component. Theoretically, GSD-ST enables the infinitely many users to share a limited bandwidth because the probability of randomly selecting an identical continuous number is \emph{zero}. In practice, the performance of GSD-ST is bounded by the signal-to-noise ratio (SNR). Therefore, we can infer that the proposed GSD-ST would be beneficial when the SNR of each signal is high and the orthogonality between the signals cannot be ensured.

To harness the advantage of GSD-ST, we propose a novel technique (No-INFRA) as an application of GSD-ST to radio access networks. It strives to eliminate the effect of collisions of multiple access attempts by permitting each user to randomly select its frequency within a limited bandwidth and by employing GSD-ST for the demodulation process. To clarify our contribution, we assume that a single-path channel model is applied hereafter. The following one-to-one correspondences between GSD-ST and No-INFRA hold, and thus we use these terms interchangeably.

\begin{itemize}
    \item Number of geometric sequences ($k$) $\rightleftharpoons$ Number of signals containing independent messages.
    \item Non-orthogonally superposed $k$ geometric sequences with noise ($\mathbf{s}_w$) $\rightleftharpoons$ Sampled signal at the receiver.
    \item Initial term of the $n$-th geometric sequence ($a_n$) $\rightleftharpoons$ Multiplication of the $n$-th symbol and the channel gain between the $n$-th transmitter and the receiver.
    \item Common ratio of the $n$-th geometric sequence ($r_n$) $\rightleftharpoons$ Exponential function of the Doppler-shifted subcarrier carrying the $n$-th messages.
\end{itemize}

\subsection{Design of No-INFRA}

Let $f_n$ denote the frequency of the transmitted signal of the $n$-th transmitter. We consider that $f_n$ follows the uniform distribution, $f_n \sim \mathcal{U}(1/T,F)$, for any $n$. Here, $T$ and $F$ are the symbol duration and signal bandwidth, respectively. Each transmitter uses a single subcarrier to deliver information. For a continuous time duration $t \in [0,T]$, the baseband signal of the $n$-th transmitter can be expressed as $x_n ( e^{ j 2\pi f_n })^t$. Here, $x_n$ is a modulated symbol containing the information transmitted by the $n$-th baseband signal, where $\mathbb{E}[||x_n||^2]=1$. Then, for a discrete sampling domain $l \in \{0,1,\cdots,P-1 \}$, the discrete baseband sequence at the receiver is given by
\begin{equation}
    \mathbf{s}_w := \sum_{n=1}^k \{\beta_ne^{j\theta_n} x_n ( e^{j 2\pi \tilde{f}_n})^{l\Delta T_s}\}_{l=0}^{P-1} + \mathbf{w},
\end{equation}
where $\Delta T_s$ and $\tilde{f}_n$ are the sampling interval and the Doppler-shifted subcarrier of the $n$-th transmitter, respectively. In addition, $\beta_n$ and $\theta_n$ are the channel coefficients between the $n$-th transmitter and the receiver related to the power attenuation (owing to path-loss and shadowing) and the phase rotation (owing to the delay spread and the Doppler frequency), respectively, which can be estimated at the receiver. The channel gain and modulated symbol, i.e., $\beta_n$, $\theta_n$, and $x_n$, are integrated into the initial term of the $n$-th sequence, $a_n$. Furthermore, $\Delta T_s$ and $\tilde{f}_n$ are integrated to the common ratio of the $n$-th sequence, $r_n$. The procedure for No-INFRA is described in Algorithm 3.

\begin{algorithm}[t!]
 \caption{: Procedure for No-INFRA}\label{alg:rmleft}
 \begin{algorithmic}[1]
\State $<$\textit{transmitter side}$>$
\State Modulate the information to the modulated symbol, $x_i$.
\State Determine the frequency of signal $f_n \sim \mathcal{U}(1/T,F)$ to carry information.
\State Transmit the modulated continuous signal, $x_n ( e^{ j 2\pi f_n })^t$ for the time interval $[0, T]$.
\State $<$\textit{receiver side}$>$
\State Set $\mathbf{s}_w$ by a discrete sampling.
\State Operate Algorithm 2 with $\mathbf{s}_w$ to estimate $k$, $\mathbf{a}$, and $\mathbf{r}$.
\State Demodulate $x_n$ with the prior knowledge of $\beta_n$ and $\theta_n$, i.e., $x_n = a_n / (\beta_n e^{j2\pi \theta_n})$, for all $n$.
 \end{algorithmic}
\end{algorithm}

\section{Performance Analysis}

In this section, we evaluate the performance of the proposed No-INFRA mainly in terms of symbol error rate (SER). The result is based on $2 \times 10^5$ Monte Carlo simulation experiments. We assume the center frequency, signal bandwidth, and symbol duration to be 6 GHz, 1 MHz, and 30 $\mu$s, respectively. Furthermore, we set the sampling rate at the receiver to be equivalent to the signal bandwidth, i.e., $P=30$. The SNR of each signal is assumed to follow a normal distribution, $\mathcal{N}(\gamma_{dB}, \sigma_{dB}^2)$, in dB scale.


Additionally, we assume that the delay spread and the Doppler frequency of each signal follow uniform distributions in the ranges [0, 1 $\mu$s] and [-1 kHz, 1 kHz], respectively. Note that No-INFRA is robust to the frequency distortion caused by the Doppler effect because the users select frequencies randomly in the beginning. The robustness to the Doppler effect may be increased further if No-INFRA is combined with orthogonal time frequency space (OTFS) modulation which utilizes the delay-Doppler domain \cite{hadani2017orthogonal, hadani2018otfs}. However, it is beyond the scope of this paper and would be considered in a future study.

\begin{figure*}
        \centering
        \begin{subfigure}[b]{0.48\textwidth}
            \centering
            \includegraphics[width=\textwidth]{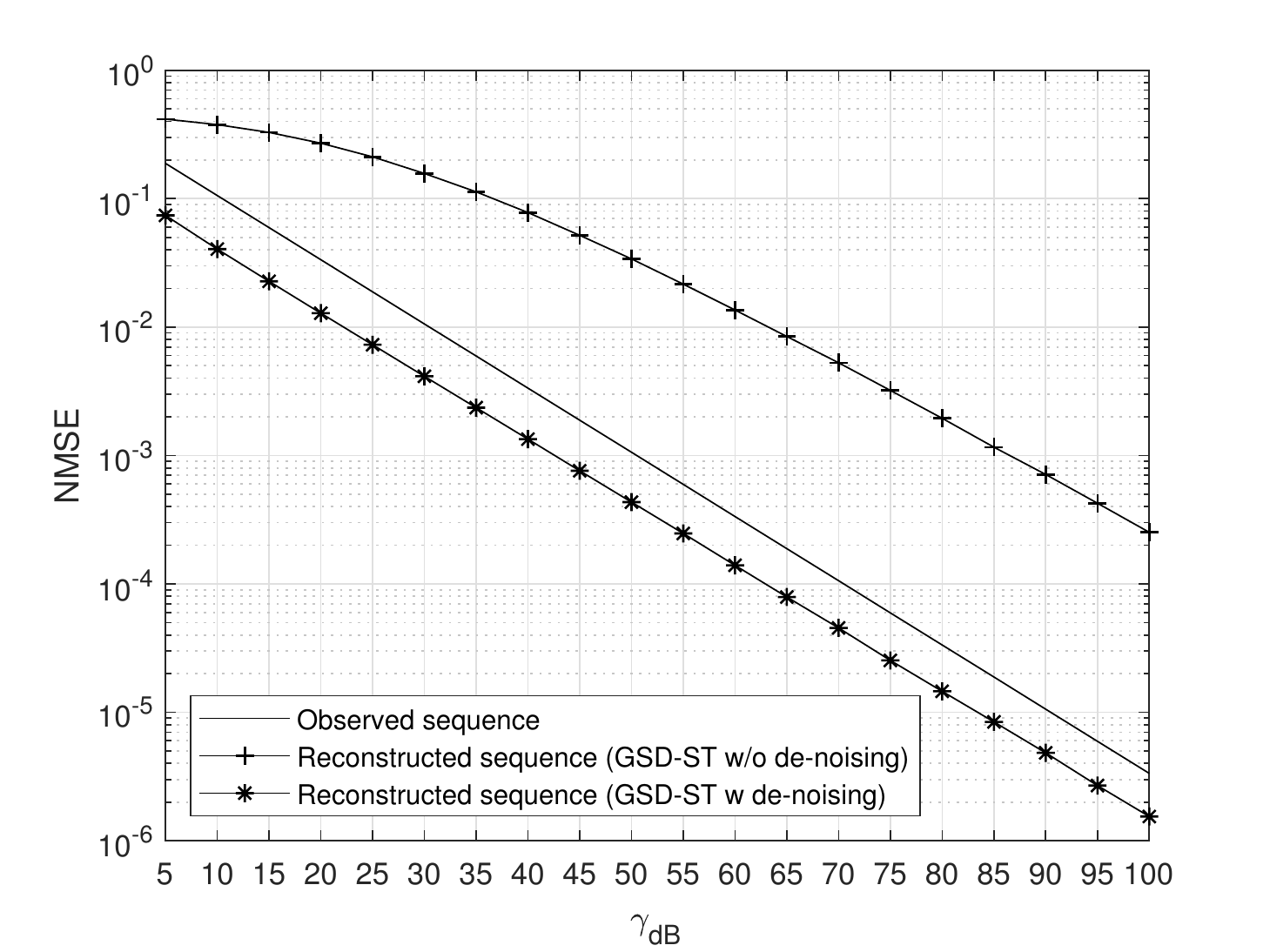}
            \caption[]%
            {{\small Effect on GSD-ST in terms of NMSE}}
            \label{DN1}
        \end{subfigure}
        \hfill
        \begin{subfigure}[b]{0.48\textwidth}
            \centering
            \includegraphics[width=\textwidth]{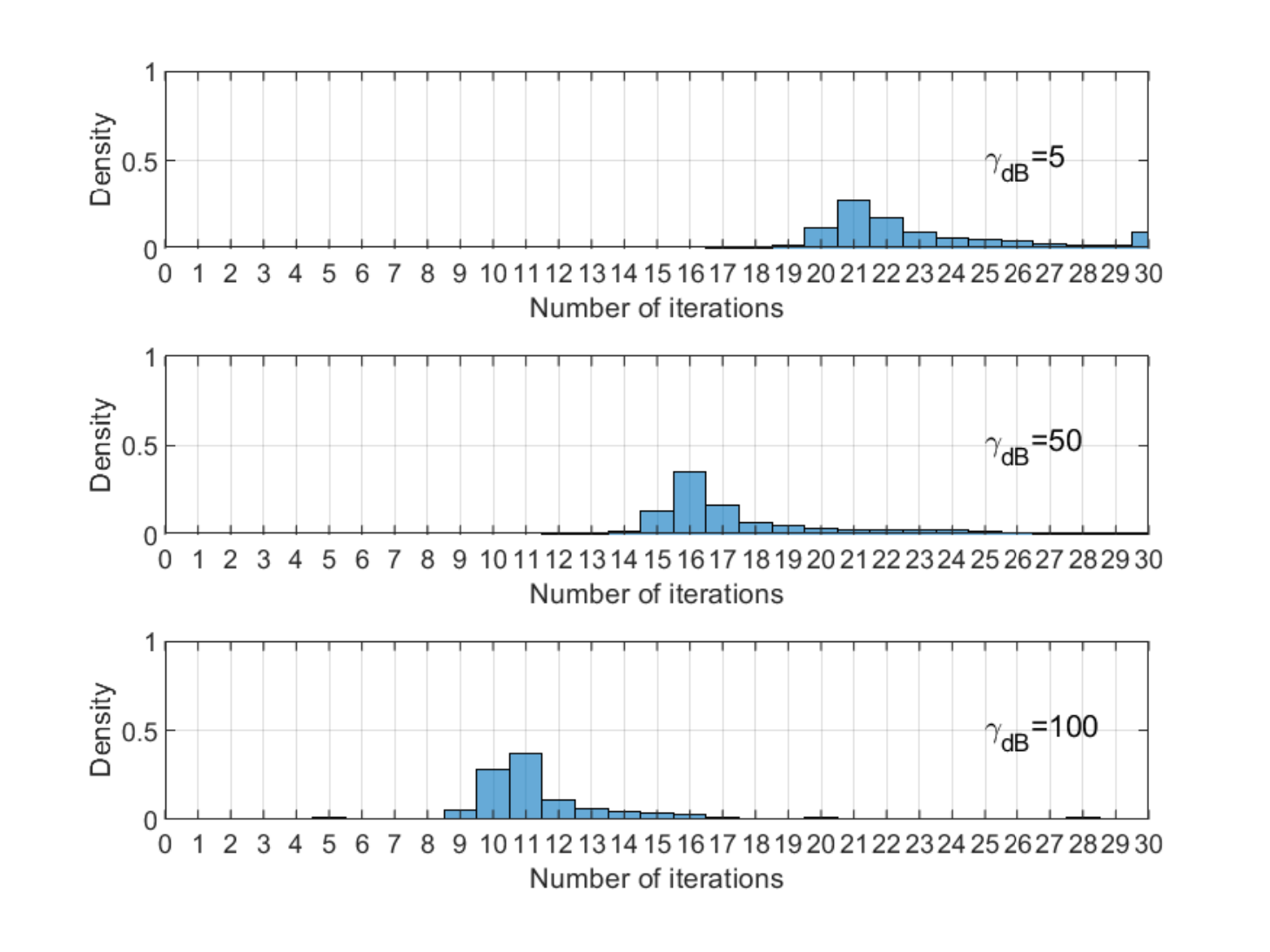}
            \caption[]%
            {{\small Convergence speed}}
                    \label{DN2}
        \end{subfigure}
        \caption{Performance of the de-noising process.
        }
        \label{DN}
    \end{figure*}

\subsection{Effect of De-noising Process}
First, we examine the impact of the de-noising process on the performance of GSD-ST. Fig. \ref{DN} shows the performance of de-noising in terms of normalized mean square error (NMSE) between the original sequence and observed/reconstructed sequences when $k$ is known. We set $\sigma_{dB}$ to zero, which implies that all the sequences undergo identical $\gamma_{dB}$. The convergence speed of de-noising is also presented to indicate its complexity.
Each transmitter selects the frequency of subcarrier through a continuous uniform distribution in the range $[33.33 \text{ kHz}, 1 \text{ MHz}]$. We set the stopping criterion ($\epsilon$) and maximum number of iterations ($I_{max}$) as $10^{-10}$ and 30, respectively.

Fig. \ref{DN1} indicates that the de-noising is an appropriate pre-processing of GSD-ST. Observe that the NMSE of the observed sequence is inverse-proportional to the SNR. As anticipated, the reconstruction of the sequence through GSD-ST incurs more errors without the de-noising. Conversely, the de-noising makes the reconstructed sequence even closer to the original one. The gap between GSD-ST with the de-noising and the observed sequence remains almost constant regardless of $\gamma_{dB}$. This indicates that the de-noising is effective in the whole range of SNR. Hence, we continue to employ the de-noising in the subsequent experiments.

The computational complexity required for de-noising is derived as $\mathcal{O}(IkP_h(P-P_h+1))$ based on the complexity of $k$-truncated SVD \cite{zhang2014randomized}. Here, $P_h=\floor{\frac{P+1}{2}}$ and $I$ is the number of iterations. The distributions of $I$ for different $\gamma_{dB}$ values are shown in Fig. \ref{DN2} to represent the complexity of de-noising and its convergence tendency. It is observed that the de-noising converges faster in the high SNR regime. When $\gamma_{dB}=5$, the de-nosing process fails to converge for 9.35$\%$ of the cases. In contrast, the de-noising is completed in a relatively short time for the high SNR, e.g., 11.33 iterations when $\gamma_{dB}=100$.

\subsection{Comparison with Conventional Scheme in Random Access}

In this subsection, the SER performance of No-INFRA is shown as functions of the SNR distribution ($\mathcal{N}(\gamma_{dB}, \sigma_{dB}^2)$), number of transmitters ($k$), and modulation order ($M$) under the assumption that $k$ is known.
QAM is adopted for modulation in our simulation.
For a performance comparison, we select orthogonal random access with successive interference cancellation (ORA+SIC) \cite{islam2016power, choi2017noma}. It consists of two steps for demodulation: the fast Fourier transform (FFT) in the time domain and SIC in the power domain. The same symbol duration and the signal bandwidth as No-INFRA yield 30 orthogonal subcarriers. However, even a few transmitters may experience interference owing to collisions that are incurred by the uncoordinated nature of the random access. For the case of collision, SIC is employed to reduce the SER.

\begin{figure}[t!]
\centering
\includegraphics[width=0.50\textwidth]{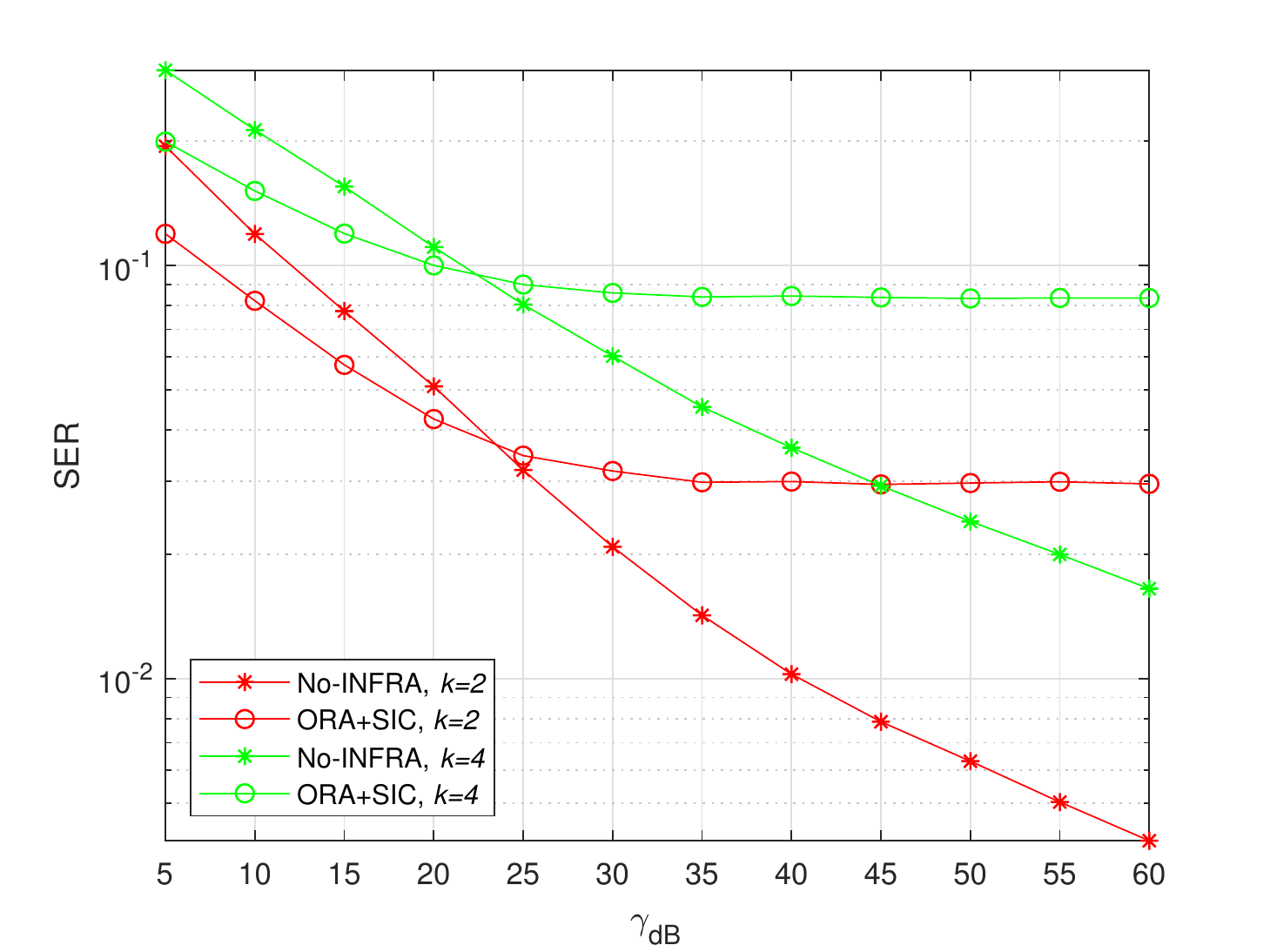}
\caption{SER according to $\gamma_{dB}$ ($M=16$ and $\sigma_{dB}=10$).}
\label{SERGAMMA}
\end{figure}

Fig. \ref{SERGAMMA} illustrates the SER of the No-INFRA and ORA+SIC schemes concerning $\gamma_{dB}$ under the setting of $M=16$ and $\sigma_{dB}=10$. This figure shows that the performance of ORA+SIC is saturated even in the high SNR regime. In the case of $k=2$, the average signal-to-interference-plus-noise ratio (SINR) is approximately 0 dB if a collision occurs. Thus, irrespective of how large $\gamma_{dB}$ is, the interference between signals remains dominant, which results in the saturation of SER performance.
This tendency is more severe for the case of $k = 4$, where the average SINR is almost -5 dB. In contrast, No-INFRA displays a remarkable SER performance over a region where interference is dominant compared to noise. That is, No-INFRA responds more strongly to a weaker noise power than ORA+SIC. Therefore, notwithstanding the poor average SINR, the SER of No-INFRA decreases linearly as $\gamma_{dB}$ increases in the log-log scale. No-INFRA starts to outperform ORA+SIC at $\gamma_{dB}=25$, and the gap widens as $\gamma_{dB}$ increases.

\begin{figure}[t!]
\centering
\includegraphics[width=0.50\textwidth]{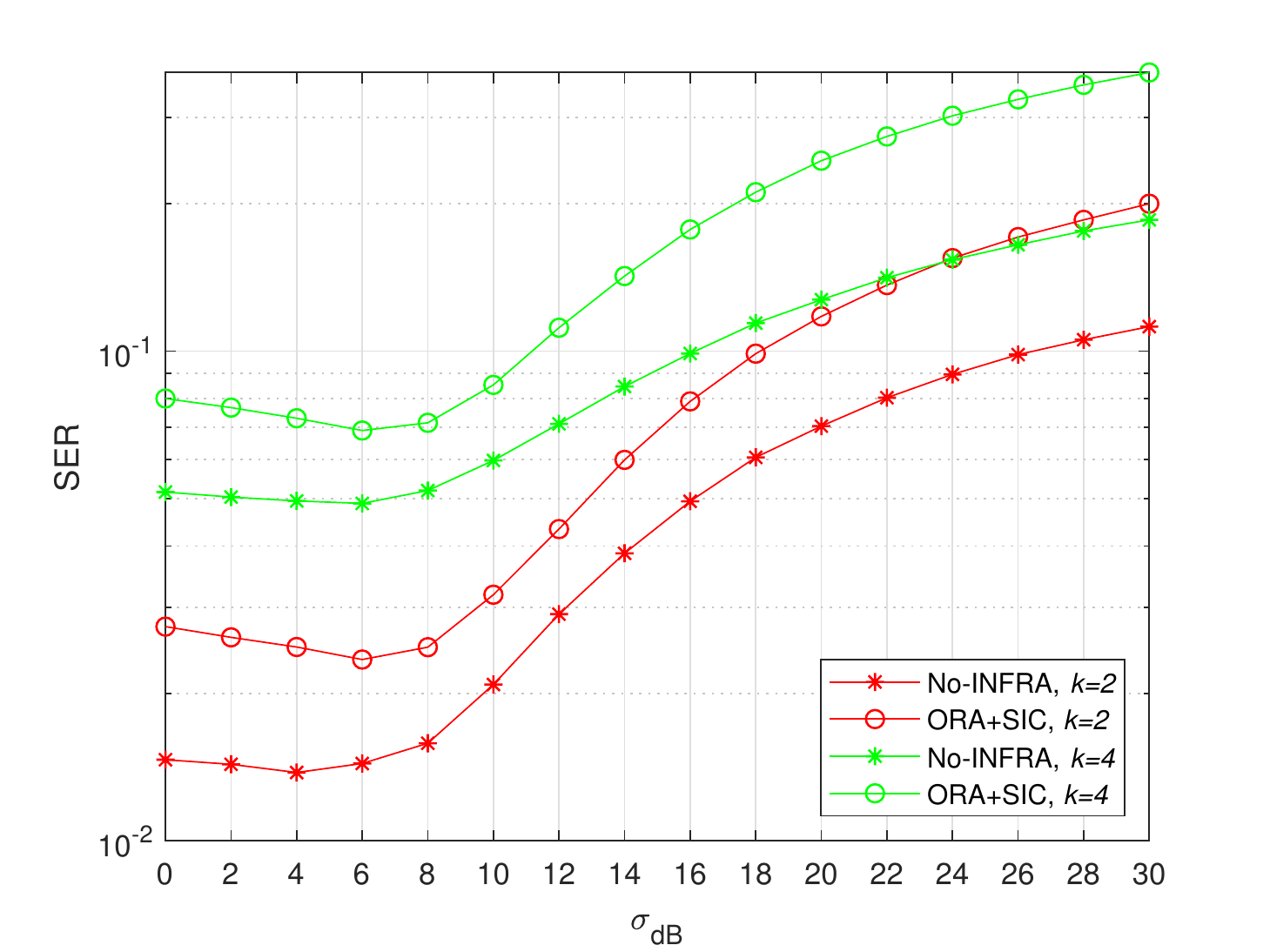}
\caption{SER according to $\sigma_{dB}$ ($M=16$ and $\gamma_{dB}=30$).}
\label{SERSIGMA}
\end{figure}

Next, the SER of No-INFRA and ORA+SIC is depicted in Fig. \ref{SERSIGMA} with respect to $\sigma_{dB} \in \{ 0, 2, \cdots ,30\}$ and with a fixed $\gamma_{dB}$. No-INFRA outperforms ORA+SIC regardless of $k$ and $\sigma_{dB}$. The SER of ORA+SIC improves as $\sigma_{dB}$ increases from zero to six because the received powers fluctuate more, which creates more suitable conditions for SIC to be effective. However, when $\sigma_{dB} > 6$, the SER of ORA+SIC deteriorates because the SINR of the weaker transmitter tends to be insufficient, whereby it demodulates only the stronger one. This phenomenon occurs in No-INFRA as well. Because the largest $k$ singular values are selected during the de-noising process, the increment in $\sigma_{dB}$ forces the small singular values to be buried in the noise.

\begin{figure}[t!]
\centering
\includegraphics[width=0.50\textwidth]{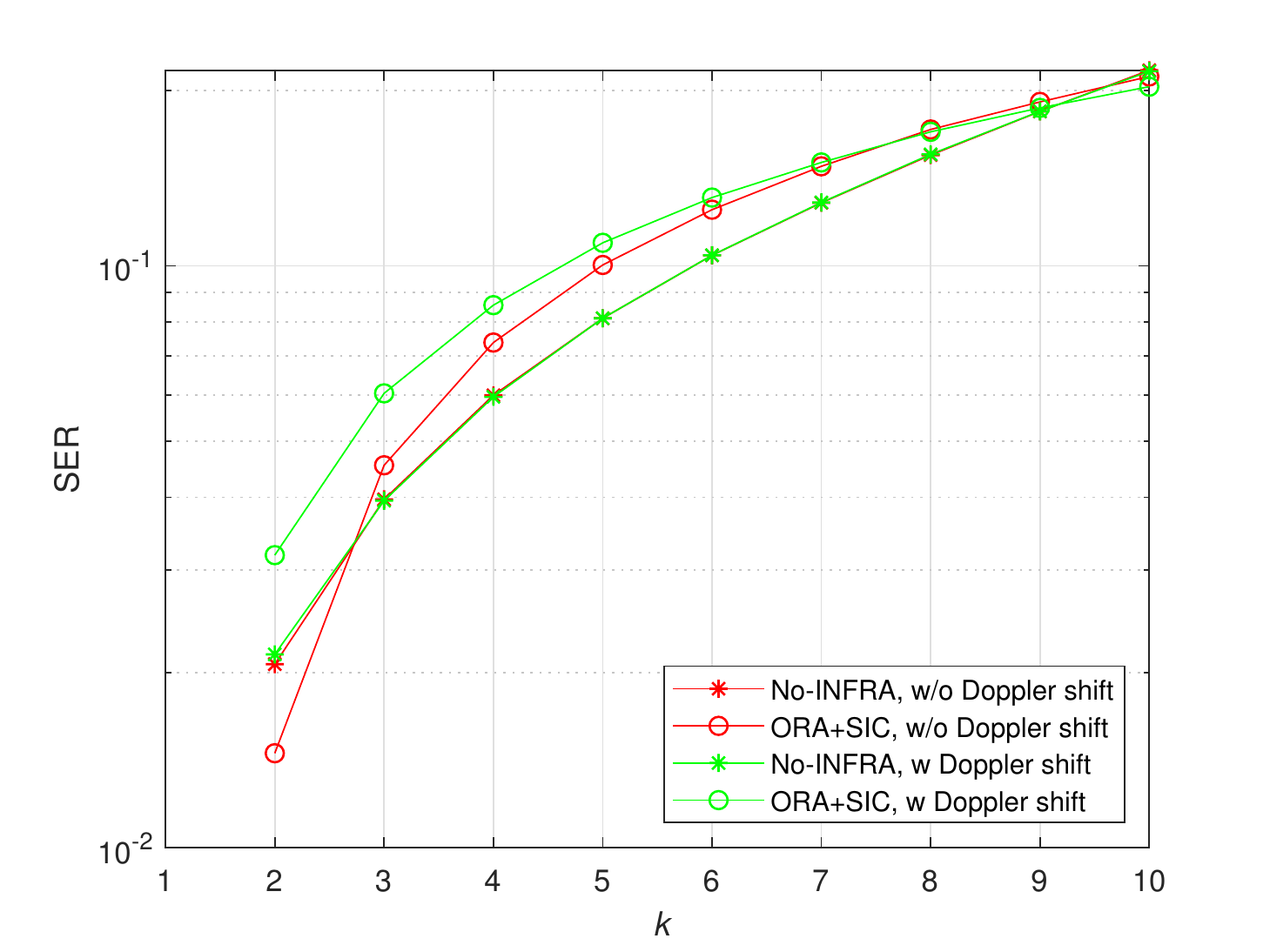}
\caption{SER according to $k$ ($M=16$, $\gamma_{dB}=30$, and $\sigma_{dB}=10$).}
\label{SERk}
\end{figure}

Fig. \ref{SERk} shows the SER of No-INFRA and ORA+SIC according to the increase in $k$. Note that the transmitters select random frequencies in No-INFRA, which implies that the Doppler shift does not influence the performance of No-INFRA. This indicates that GSD-ST can be a robust tool to combat the signal distortion by the Doppler shift. Meanwhile, ORA+SIC suffers from the orthogonality crack between the subcarriers in the presence of the Doppler shift. Therefore, No-INFRA displays a superior SER performance than ORA+SIC, particularly when the Doppler shift exists.

\begin{figure}[t!]
\centering
\includegraphics[width=0.50\textwidth]{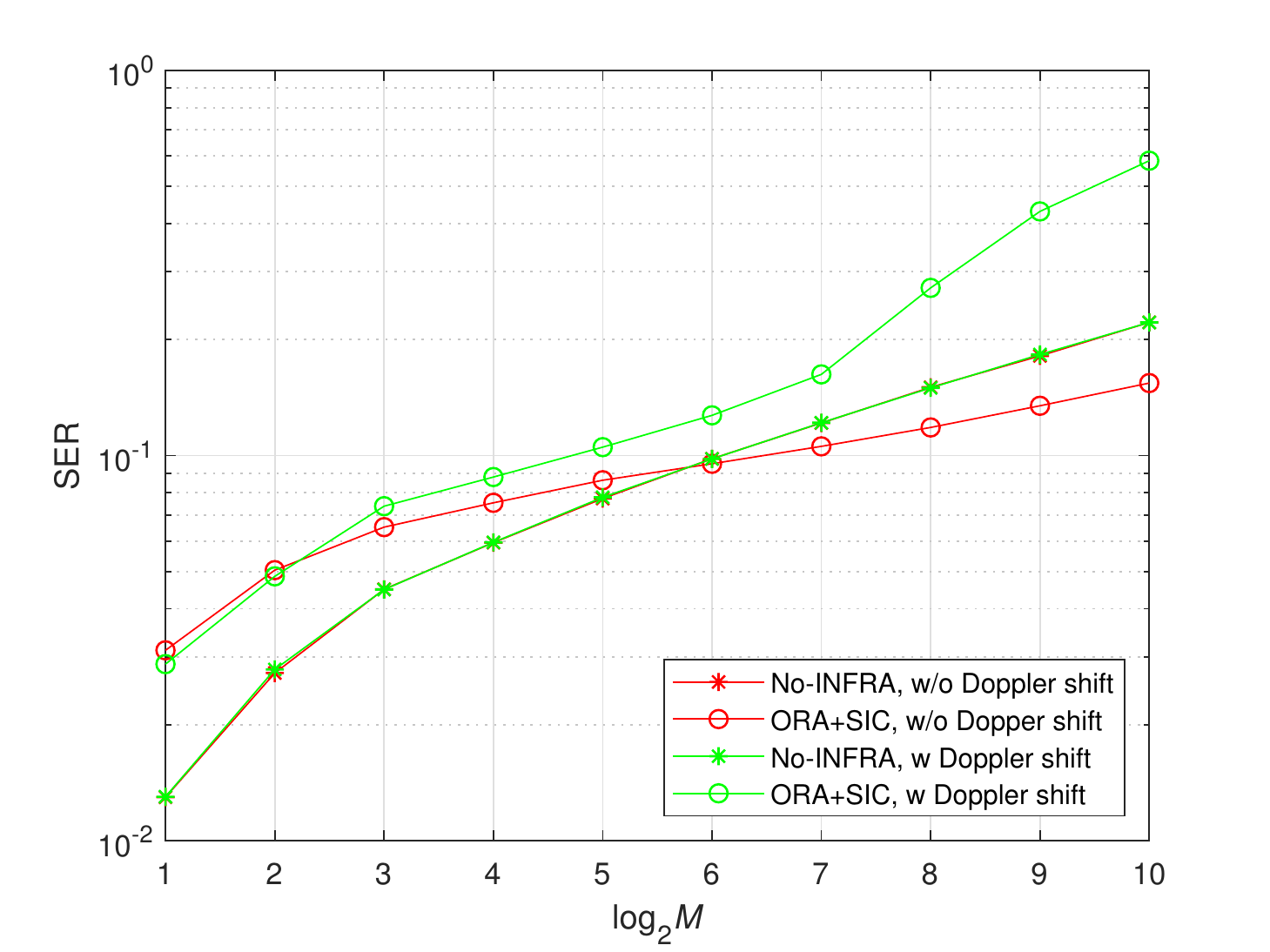}
\caption{SER according to $M$ ($k=4$, $\gamma_{dB}=30$, and $\sigma_{dB}=10$).}
\label{SERM}
\end{figure}

In Fig. \ref{SERM}, we compare the SER of No-INFRA and ORA+SIC under a variation in the modulation order, $M$. In the case of ORA+SIC, the effect of Doppler shift becomes strong at 8 QAM. For BPSK and QPSK, the demodulation is possible through the phase difference. However, from 8 QAM, the information on power difference is also required. Therefore, the performance to resolve the interference in the power domain starts to be influenced by the effect of the Doppler shift from 8 QAM. No-INFRA is devoid of this phenomenon, and therefore, outperforms ORA+SIC in the presence of the Doppler shift. However, in the static environment, ORA+SIC is more robust at high modulation orders, e.g., above 128 QAM.

To summarize, No-INFRA outperforms ORA+SIC in the scenarios where the Doppler shift and similar received powers exist. It is noteworthy that No-INFRA and ORA+SIC can be complementing techniques. Because No-INFRA is an algorithm for addressing $k$ random frequencies, it is naturally immune to the Doppler shift. Meanwhile, ORA+SIC is sensitive to the orthogonality crack owing to the Doppler shift. In addition, No-INFRA performs best when the received powers of the individual signals are similar, which is not favorable to ORA+SIC. GSD-ST utilizes the high SNR completely, whereas ORA+SIC is more robust to the noise. Therefore, No-INFRA and ORA+SIC can be alternative design options for different environments.

\subsection{Detection Rate of Number of Transmitters}

\begin{table*}[t!]
  \begin{center}
    \caption{Detection rate of $k$ based on the three similarity functions (left: $\sigma_{dB}=0$, right: $\sigma_{dB}=10$)}
    \label{T1}
\begin{tabular}
{|c||c|c||c||c|c||c||c|c|}
 \hline
 $D_{r}$ & $\gamma_{dB}=30$ & $\gamma_{dB}=60$ & $D_{d}$ & $\gamma_{dB}=30$ & $\gamma_{dB}=60$ & $D$ & $\gamma_{dB}=30$ & $\gamma_{dB}=60$ \\
 \hline
 \hline
 $k=1$ & 1.000, 1.000 & 1.000, 1.000 & $k=1$ & 1.000, 1.000 & 1.000, 1.000 & $k=1$ & 1.000, 1.000 & 1.000, 1.000 \\
 $k=2$ & 0.778, 0.627 & 0.912, 0.857 & $k=2$ & 0.833, 0.636 & 0.935, 0.878 & $k=2$ & 0.846, 0.641 & 0.947, 0.890 \\
 $k=3$ & 0.598, 0.440 & 0.797, 0.734 & $k=3$ & 0.637, 0.453 & 0.833, 0.766 & $k=3$ & 0.644, 0.459 & 0.835, 0.770 \\
 $k=4$ & 0.395, 0.261 & 0.634, 0.569 & $k=4$ & 0.436, 0.298 & 0.672, 0.599 & $k=4$ & 0.437, 0.301 & 0.6718, 0.601 \\
 \hline
\end{tabular}
  \end{center}
\end{table*}

In this subsection, we analyze the performance of the estimation of the number of transmitters, $k$. Here, we assume $k \in \{1,2,3,4 \}$.
It is equivalent to the process of estimating the number of parameters using only 30 noisy samples without additional information such as noise level. Table \ref{T1} shows the detection rates of the three similarity functions $D_r$, $D_d$, and $D$ in the environments of $\gamma_{dB} \in \{30, 60\}$ and $\sigma_{dB} \in \{0, 10\}$. In all the scenarios, the detection rate is higher at $\sigma_{dB}=0$ than at $\sigma_{dB}=10$. This indicates that a higher detection rate is obtained when the difference in the received power of each transmitter's signal is less. It is noteworthy that in the cases of $k=2$ and $\gamma_{dB}=60$, the detection rate is higher than $85\%$ for all the similarity functions. In our algorithm, the most probable $k$ is determined based on the values of the similarity function calculated over the given samples. Therefore, it is superior to techniques that require a relatively large number of samples and prior knowledge of noise levels, such as the constant false alarm rate (CFAR) algorithm \cite{rohling1983radar, tao2016segmentation}. Moreover, this method has the potential to be used as a pre-process for parametric estimation techniques such as the multiple signal classification (MUSIC) algorithm that assumes the number of parameters is known \cite{stoica1989music, zhang2018localization}.

\section{Conclusion and Future Works}

We introduced a mathematical method for decomposing non-orthogonally superposed $k$ geometric sequences, which we call GSD-ST.
Our method converts the problem of decomposing $k$ geometric sequences into root-finding of a $k$-th order polynomial equation.
We employed the concept of $k$-simplex for a formal derivation of the method and established that only $2k+1$ samples of the superposed sequence are required for the entire process of GSD-ST.

The proposed GSD-ST can be applied widely to the field of wireless communications because an equidistant sampling of a radio wave comprises a geometric sequence. We presented a new radio access scheme, namely No-INFRA, to illustrate GSD-ST's potential for addressing non-orthogonally accumulated radio signals. It enables a receiver to demodulate multiple uncoordinated access requests simultaneously. Numerical results show that No-INFRA is effective in interference-limited environments.

Considering the intrinsic similarity between radio waves and geometric sequences, we believe that the GSD-ST method can open new horizons in various research fields.
Depending on the physical domain in which the superposed radio waves are sampled, the potential of GSD-ST is broadened for sparse channel estimation that captures the features of multi-path channels such as excess delay, Doppler shift, or direction-of-arrival.
Furthermore, the GSD-ST method can be effective for mitigating the orthogonality cracks such as inter-symbol or inter-carrier interference in OFDM systems.
The abundant applicability of GSD-ST would play a crucial role in providing disruptive technologies for wireless communications.

\appendices
\section{Complex-Valued Example of GSD-ST}

Consider the following two geometric sequences:
\begin{equation}
\begin{split}
\mathbf{s}_1 & = \{ a_1 r_1^l \}_{l=0}^{P-1} = \{ (64+32j) \cdot (0.5 - 0.5j)^l \}_{l=0}^{P-1} \\ & = \{64+32j, 48 -16j, 16 -32j,  -8 -24j, -16 - 8j, -12 + 4j,  -4 + 8j,   2 + 6j, \cdots \}, \\
\mathbf{s}_2 & = \{ a_2 r_2^l \}_{l=0}^{P-1} = \{ (0.125 + 0.0625j) \cdot (2+j)^l \}_{l=0}^{P-1} \\ & = \{   0.125 + 0.0625j,   0.1875 + 0.25j,   0.125 + 0.6875j,  -0.4375 + 1.5j,  -2.375 + 2.5625j, \\ &  -7.3125 + 2.75j, -17.375 - 1.8125j, -32.9375 -21j,
  \cdots \},
\end{split}
\end{equation}
where $j=\sqrt{-1}$.
Here, we can observe only their superposition, $\mathbf{s}$, i.e.,
\begin{equation}
\begin{split}
\mathbf{s} & = \mathbf{s}_1 + \mathbf{s}_2 = \{ (64+32j) \cdot (0.5 - 0.5j)^l  + (0.125 + 0.0625j) \cdot (2+j)^l \}_{l=0}^{P-1} \\ & = \{  64.125 +32.0625j, 48.1875 -15.75j,  16.1250 -31.3125j,  -8.4375 -22.5j, \\ & -18.375 - 5.4375j, -19.3125 + 6.75j, -21.3750 + 6.1875j, -30.9375 -15j, \cdots \}.
\end{split}
\end{equation}

Let us obtain $k$, $\mathbf{a}$, and $\mathbf{r}$.

\subsubsection{Obtaining $k$}
Consider an arbitrary $\hat{k}$ as an estimate of $k$.
For $\hat{k}=2$, we consider a two-dimensional space in which we generate 2-simplexes, i.e., triangles, from the origin and consecutive values of $\mathbf{s}$. Let us create three triangles ($A_1$, $A_2$, and $A_3$) with the following coordinates:
\begin{equation}
\begin{split}
   A_1 : & [ (0,0)^{\text{T}}, (64.125 +32.0625j,48.1875 -15.75j)^{\text{T}}, (48.1875 -15.75j,16.1250 -31.3125j)^{\text{T}} ], \\
   A_2 : & [ (0,0)^{\text{T}}, (48.1875 -15.75j,16.1250 -31.3125j)^{\text{T}}, (16.1250 -31.3125j,-8.4375 -22.5j)^{\text{T}} ], \\
   A_3 : & [ (0,0)^{\text{T}}, (16.1250 -31.3125j,-8.4375 -22.5j)^{\text{T}}, (-8.4375 -22.5j,-18.375 - 5.4375j)^{\text{T}} ].
\end{split}
\end{equation}
Then, we examine whether the volumes of the triangles, $\Lambda(A_n)$, constitute a geometric sequence.
Here, $\Lambda(A_1) =  -18 +13.5j$, $\Lambda(A_2) =  -20.25 +29.25j$, and $\Lambda(A_3) =   -15.75 +54j$, which is a geometric sequence with a common ratio of $   1.5 - 0.5j$. Therefore, we verify that $\mathbf{s}$ is a superposition of \textit{two} geometric sequences $(k=2)$.
For $\hat{k}>2$, the volumes of $\hat{k}$-simplexes always constitute a sequence of zeros.

\subsubsection{Obtaining $\mathbf{a}$ and $\mathbf{r}$}

Let us extract $2k$ samples of $\mathbf{s}$ and create three vertices with the coordinates
\begin{equation}
\begin{split}
    [(64.125 +32.0625j,48.1875 -15.75j)^{\text{T}}, \\ (48.1875 -15.75j,16.1250 -31.3125j)^{\text{T}}, \\ (16.1250 -31.3125j,-8.4375 -22.5j)^{\text{T}}].
\end{split}
\end{equation}
Next, select $k$ vertices out of $k+1$ shown above. By including the origin, we can create $k+1$ $k$-simplexes in a lexicographically ordered manner. This corresponds to three triangles in this example, with the following coordinates:
\begin{equation}
\begin{split}
   B_1 : & [(0,0)^{\text{T}}, (64.125 +32.0625j,48.1875 -15.75j)^{\text{T}}, (48.1875 -15.75j,16.1250 -31.3125j)^{\text{T}}], \\
   B_2 : & [(0,0)^{\text{T}}, (64.125 +32.0625j,48.1875 -15.75j)^{\text{T}}, (16.1250 -31.3125j,-8.4375 -22.5j)^{\text{T}}], \\
   B_3 : & [(0,0)^{\text{T}}, (48.1875 -15.75j,16.1250 -31.3125j)^{\text{T}}, (16.1250 -31.3125j,-8.4375 -22.5j)^{\text{T}}].
\end{split}
\end{equation}
Let $\Lambda(B_n)$ denote the volume of the $n$-th tetrahedron.
Again, the following relationship holds by \textbf{Theorem 2} in Section \ref{combinatorial_simplex}:
\begin{equation}
    \left\{ \frac{\Lambda(B_1)}{\Lambda(B_1)}, \frac{\Lambda(B_2)}{\Lambda(B_1)}, \frac{\Lambda(B_3)}{\Lambda(B_1)}\right\} = \left\{ 1 , (r_1 + r_2), r_1 r_2 \right\}.
\end{equation}
Observe that these are the coefficients of a polynomial whose roots are $r_1$ and $r_2$. Therefore, the common ratios of the geometric sequences can be obtained by solving the quadratic equation shown below:
\begin{equation}
    x^2 - \frac{\Lambda(B_2)}{\Lambda(B_1)} x + \frac{\Lambda(B_3)}{\Lambda(B_1)} =0.
\end{equation}
Finally, once the common ratios of the sequences are obtained, we can extract the initial terms by solving a simple linear system of equations.

\section{Proof of Theorem 2}

Let $\Bar{\mathbf{\Omega}}_j \in \mathbf{C}^{(k+1) \times k}$ be the matrix form of the $j$-th union polyhedron. Similar to the matrix decomposition in \eqref{finalD}, $\Bar{\mathbf{\Omega}}_j$ can also be decomposed as follows:

\begin{gather}\label{vol_j_j+1}
    \Bar{\mathbf{\Omega}}_j =
    \mathbf{\Phi}
    \mathbf{\Sigma}_{\mathbf{a}}
    \mathbf{\Sigma}_{\mathbf{r}}^{j}
\cdot
        \begin{bmatrix}
    r_1^{0} & \cdots & r_k^0  \\
    r_1^{1} & \cdots & r_k^1 \\
    \vdots & \ddots & \vdots\\
    r_1^{k-1} & \cdots &  r_k^{k-1} \\
    r_1^{ k} & \cdots &  r_k^{ k}
\end{bmatrix}
^{\text{T}}.
    \end{gather}
Furthermore, we define $\Bar{\mathbf{\Psi}}$ as follows to express $\Bar{\mathbf{\Omega}}_j$ as $\mathbf{\Phi} \mathbf{\Sigma}_{\mathbf{a}} \mathbf{\Sigma}_{\mathbf{r}}^{j} \Bar{\mathbf{\Psi}}^\text{T}$:
\begin{equation}
\Bar{\mathbf{\Psi}} =
\begin{bmatrix}
    r_1^{0} & \cdots & r_k^0  \\
    r_1^{1} & \cdots & r_k^1 \\
    \vdots & \ddots & \vdots\\
    r_1^{k-1} & \cdots &  r_k^{k-1} \\
    r_1^{ k} & \cdots &  r_k^{ k}
\end{bmatrix}.
\end{equation}
To represent $\varkappa_j(\phi_{k,\mathbf{s}})$, let $\mathbf{Y}_l \in \{0,1\}^{(k+1) \times k}$ be the sketch matrix capturing all the rows excluding the $l$-th row. For example, $\mathbf{Y}_2 = \begin{bmatrix}
  1 & 0 & 0 & 0 \\
  0 & 0 & 1 & 0\\
  0 & 0 & 0 & 1
\end{bmatrix}^{\text{T}}$ when $k=3$. The volume of $\varkappa_j(\phi_{k,\mathbf{s}})[l]$ is given by
\begin{equation}
\Lambda(\varkappa_j(\phi_{k,\mathbf{s}})[l]) = \frac{1}{k!} \cdot \det(\mathbf{\Bar{\mathbf{\Omega}}}_j\mathbf{Y}_l).
\end{equation}
Then, \eqref{vj} can be rewritten as follows:

\begin{equation}
        \mathbf{v}_j(\phi_{k,\mathbf{s}}) = \{ \frac{  \det(\mathbf{\Bar{\mathbf{\Omega}}}_j \mathbf{Y}_0 )}{\det(\mathbf{\Bar{\mathbf{\Omega}}}_j\mathbf{Y}_0 )},  \cdots, \frac{  \det(\mathbf{\Bar{\mathbf{\Omega}}}_j \mathbf{Y}_k )}{\det(\mathbf{\Bar{\mathbf{\Omega}}}_j\mathbf{Y}_0 )} \} = \{ \frac{  \det(\Bar{\mathbf{\Psi}}^{\text{T}} \mathbf{Y}_0 )}{\det(\Bar{\mathbf{\Psi}}^{\text{T}}\mathbf{Y}_0 )}, \cdots, \frac{\det(\Bar{\mathbf{\Psi}}^{\text{T}} \mathbf{Y}_k )}{\det(\Bar{\mathbf{\Psi}}^{\text{T}} \mathbf{Y}_0)} \}.
\end{equation}
For simplicity, let $\mathbf{\Psi}_l$ denote $\Bar{\mathbf{\Psi}}^{\text{T}}\mathbf{Y}_l$. This implies that $\mathbf{\Psi}_l$ is the transpose of the matrix excluding the $(k-l)$-th row of $\Bar{\mathbf{\Psi}}$.
Then, we can simplify $\mathbf{v}_j(\phi_{k,\mathbf{s}})$ as follows:
\begin{equation}\label{v_reform}
\mathbf{v}_j(\phi_{k,\mathbf{s}}) =
 \{ \frac{  \det({\mathbf{\Psi}}_0 )}{\det({\mathbf{\Psi}}_0 )}, \cdots, \frac{\det({\mathbf{\Psi}}_k )}{\det({\mathbf{\Psi}}_0)} \}.
\end{equation}
Let us have a closer look at $\det(\mathbf{\Psi}_0)$.
\begin{equation}\label{Psi_0}
\det({\mathbf{\Psi}}_0) = \det
\begin{bmatrix}
r_1^0 & r_1^1 & \cdots & r_1^{k-1} \\
r_2^0 & r_2^1 & \cdots & r_2^{k-1} \\
\vdots & \vdots & \ddots & \vdots \\
r_{k-1}^0 & r_{k-1}^1 & \cdots & r_{k-1}^{k-1} \\
r_k^0 & r_k^1 & \cdots & r_k^{k-1}\end{bmatrix}.
\end{equation}
We employ the technique of variable substitution to manipulate $\det(\mathbf{\Psi}_0)$. Replace $r_k$ in $\mathbf{\Psi}_0$ with the variable $x$, and generate the following polynomial $p_0(x)$:

\begin{equation}\label{p_replace}
p_0(x) = \det
\begin{bmatrix}
r_1^0 & r_1^1 & \cdots & r_1^{k-1} \\
r_2^0 & r_2^1 & \cdots & r_2^{k-1} \\
\vdots & \vdots & \ddots & \vdots \\
r_{k-1}^0 & r_{k-1}^1 & \cdots & r_{k-1}^{k-1} \\
x^0 & x^1 & \cdots & x^{k-1} \end{bmatrix}.
\end{equation}
The roots of $p_0(x)$ are given by $x \in \{ r_1, \cdots r_{k-1}\}$. This is because the $n$-th row of $\mathbf{\Psi}_0$ is represented only by $r_n$ in \eqref{p_replace}. Thus, if $x$ is replaced by $r_n$ $(n \neq k)$, $p_0(x)$ becomes zero. It implies that $(x-r_1), \cdots, (x-r_{k-1})$ are the factors of $p_0(x)$. Consequently, we can rewrite \eqref{p_replace} as follows:
\begin{equation}\label{q_replace}
p_0(x) = q(x^0)\cdot(x-r_1)\cdots(x-r_{k-1}),
\end{equation}
where $q(x^0)$ is a coefficient of $x^{k-1}$.
Considering the rule for calculating the determinant, $q(x^0)$ is equal to the determinant of $\mathbf{\Psi}_0'$, which is presented below:
\begin{equation}\label{Psi_replace}
{\mathbf{\Psi}}_0' =
\begin{bmatrix}
r_1^0 & r_1^1 & \cdots & r_1^{k-2} \\
r_2^0 & r_2^1 & \cdots & r_2^{k-2} \\
\vdots & \vdots & \ddots & \vdots \\
r_{k-1}^0 & r_{k-1}^1 & \cdots & r_{k-1}^{k-2}
\end{bmatrix}.
\end{equation}
Insert $\det (\mathbf{\Psi}_0')$ instead of $q(x^0)$ in \eqref{q_replace}, and replace $x$ with $r_k$. Then, \eqref{Psi_0} can be rewritten as follows:
\begin{equation}
\det(\mathbf{\Psi}_0) = \det(\mathbf{\Psi}_0')
\cdot(r_k-r_1)\cdots(r_k-r_{k-1}).
\end{equation}
We can repeat the above process until $\mathbf{\Psi}_0'$ is equal to $r_1^0$, i.e., one. As a result,
\begin{equation} \begin{split}
\det(\mathbf{\Psi}_0) & = 1\cdot (r_2-r_1)  \cdots \big( (r_{k-1}-r_1)\cdots(r_{k-1}-r_{k-2}) \big)\cdot \big( (r_k-r_1)\cdots(r_k-r_{k-1}) \big)
\\
 & = \prod_{1 \leq n < m \leq k}(r_m - r_n).
\end{split}
\end{equation}
Similarly, we can construct $p_1(x)$ based on $\mathbf{\Psi}_1$ as follows:
\begin{equation}\label{q_replace1}
p_1(x) = q(x^1)\cdot(x-r_1)\cdots(x-r_{k-1}),
\end{equation}
where $q(x^1) = \prod_{1 \leq n < m \leq k-1}(r_m - r_n)\cdot (x+\sum_{n=1}^{k-1}r_n)$. By replacing $x$ in \eqref{q_replace1} with $r_k$, we obtain
\begin{equation}
\det(\mathbf{\Psi}_1) = (\sum_{n=1}^k r_n)\cdot \prod_{1 \leq n < m \leq k}(r_m - r_n) = (\sum_{n=1}^k r_n)\cdot \det(\mathbf{\Psi}_0).
\end{equation}
Repeat the above process over all possible $l$. Then, we can determine the following form of $\det(\mathbf{\Psi}_l)$:

\begin{equation}
    \det(\mathbf{\Psi}_l) = \prod_{1 \leq n < m \leq k}(r_m - r_n) \sum_{1 \leq i_1 < i_2 < \dots < i_l \leq k}   \big( \prod_{n=1}^l r_{i_n} \big).
\end{equation}
Recalling \eqref{v_reform}, $\mathbf{v}_j(\phi_{k,\mathbf{s}})$ can be obtained as follows:
\begin{equation}
\mathbf{v}_j(\phi_{k,\mathbf{s}})
=\{ 1 , \cdots , \sum_{1 \leq i_1 < i_2 < \dots < i_l \leq k}   \big( \prod_{n=1}^l r_{i_n} \big), \cdots , \prod_{n=1}^{k} r_n \}.
\end{equation}
Therefore, all possible $\mathbf{v}_j(\phi_{k,\mathbf{s}})$ are identical to each other.

\ifCLASSOPTIONcaptionsoff
  \newpage
\fi

\bibliographystyle{IEEEtran}
\bibliography{references.bib}

\end{document}